\tikzstyle{background}=[rectangle,fill=gray!10, inner sep=0.1cm, rounded corners=0mm]
\tikzstyle{nloc}=[draw, text badly centered, rectangle, rounded corners, minimum size=2em,inner sep=0.5em]
\tikzstyle{background}=[rectangle,fill=gray!10, inner sep=0.1cm, rounded corners=0mm]
\tikzstyle{loc}=[draw,rectangle,minimum size=1.4em,inner sep=0em]
\tikzset{
    gluon/.style={decorate,draw=black,
        decoration={coil,amplitude=1pt, segment length=5pt}} 
}
\tikzset{
    gluonew/.style={decorate,draw=black,
        decoration={coil,amplitude=1pt, segment length=2pt}} 
}
\tikzset{
    gluon1/.style={decorate,draw=black,
        decoration={coil,amplitude=3pt, segment length=3pt}} 
}
\newcommand{\Red}[1]{{\textcolor{red}{#1}}}
\newcommand{\op}{\mathrm{op}}
\newcommand{\nop}{\mathrm{nop}}
\newcommand{\push}{\mathrm{push}}
\newcommand{\pop}{\mathrm{pop}}
\tikzset{
link/.style = {
     -stealth,
     shorten >=1pt
     },
array element/.style = {
    draw, fill=white,
    minimum width = 6mm,
    minimum height = 10mm
  },
  node of list/.style = { 
             draw, 
             fill=orange!20, 
             minimum height=6mm, 
             minimum width=6mm,
             node distance=10mm
   },
   ptr of list/.style = { 
             draw, 
             fill=orange!20, 
             minimum height=6mm, 
             minimum width=1mm,
             node distance=0mm
   },
}
\tikzset{
    ->,
    >=stealth',
    node distance=3cm,
    every state/.style={thick, fill=gray!10},
    initial text=$ $,
    }
\def\LinkedList#1{
  \foreach \element in \list {
     \node[node of list, right=15mm of aux, name=ele] {\element};
     \node[ptr of list, left=1mm of ele, name=mmp] {.};
     \node[ptr of list, left=5mm of ele, name=mmpp] {.};
     \node[ptr of list, left=9mm of ele, name=mm] {.};
     \draw[link] (aux) -- (mm);
     \coordinate (aux) at (ele.east);
  } 
}
\def\LinkedListTwo#1{
  \foreach \element in \list {
     \node[node of list, right= of aux1, name=ele] {\element};
     \draw[link] (aux1) -- (ele);
     \coordinate (aux1) at (ele.east);
  } 
}
\newcommand{\cA}{\mathcal{A}}
\newcommand{\ZG}{\mathit{ZG}}
\newcommand{\cS}{\mathfrak{S}}
\newcommand{\TS}{\mathit{TS}}
\newcommand{\TLM}{\mathsf{TLM}}
\newcommand{\Todo}{\mathsf{ToDo}}
\newcommand{\create}{\mathsf{create}}
\newcommand{\add}{\mathsf{add}}
\newcommand{\addpush}{\mathsf{addPush}}
\newcommand{\addpop}{\mathsf{addPop}}
\newcommand{\isnewroot}{\mathsf{isNewRoot}}
\newcommand{\isnewnode}{\mathsf{isNewNode}}
\newcommand{\isnewpop}{\mathsf{isNewPop}}
\newcommand{\isnewpush}{\mathsf{isNewPush}}
\newcommand{\iterpop}{\mathsf{iterPop}}
\newcommand{\iterpush}{\mathsf{iterPush}}
\newcommand{\true}{\mathsf{true}}
\newcommand{\false}{\mathsf{false}}
\DeclareRobustCommand{\preceqv}{\mathrel{\rotatebox[origin=c]{-90}{$\preceq$}}}
\title{Fast zone-based algorithms for reachability in pushdown timed automata}
\author{S. Akshay}{Department of CSE, Indian Institute of Technology Bombay, Mumbai, India}{akshayss@cse.iitb.ac.in}{https://orcid.org/0000-0002-2471-5997}{}
\author{Paul Gastin}{Universit\'e Paris-Saclay, ENS Paris-Saclay, CNRS, LMF, 91190, Gif-sur-Yvette, France}{paul.gastin@lsv.fr}{https://orcid.org/0000-0002-1313-7722}{}
\author{Karthik R. Prakash}{Department of CSE, Indian Institute of Technology Bombay, Mumbai, India}{karthikrprakash@cse.iitb.ac.in}{https://orcid.org/0000-0003-4304-1382}{}
\authorrunning{S. Akshay, P. Gastin and K. R. Prakash}
\keywords{Timed systems, Zone-abstractions, graphs}
\begin{document}    
\nolinenumbers
\maketitle

\begin{abstract}
  Given the versatility of timed automata a huge body of work has evolved that considers
  extensions of timed automata.  One extension that has received a lot of interest is  timed automata with a, possibly unbounded, stack, also called pushdown timed automata
  (PDTA).  While different algorithms have been given for reachability in different
  variants of this model, most of these results are purely theoretical and do not give
  rise to efficient implementations.  One main reason for this is that none of these
  algorithms (and the implementations that exist) use the so-called zone-based
  abstraction, but rely either on the region-abstraction or other approaches, which are significantly harder to implement.

  In this paper, we show that a naive extension, using simulations, of the zone based  reachability algorithm for the control state reachability problem of timed automata is  not sound in the presence of a stack.  To understand this better we give an inductive rule based view of the zone reachability algorithm for timed automata.  This alternate view allows us to analyze and adapt the rules to also work for pushdown timed automata. We obtain the first zone-based algorithm for PDTA which is terminating, sound and complete.  We implement our algorithm in the tool TChecker and perform experiments to show its efficacy, thus leading the way for more practical approaches to the verification of timed pushdown systems.
\keywords{Timed automata, Zone-based abstractions, Pushdown automata, Simulations, Reachability}
\end{abstract}
  
\section{Introduction}
\label{sec:intro}
Timed automata~\cite{alur1994theory} are a popular formalism for capturing real-time
systems, and of use for instance, in model checking of cyber-physical systems.  They
extend finite automata with real variables called clocks whose values increase over time;
transitions are guarded by constraints over these variables.  The main problem of interest
is the reachability problem, which asks whether a given state can be reached while
satisfying the constraints imposed by the guards.  This problem is known to be
PSPACE-complete (already shown in~\cite{alur1994theory}).  The PSPACE
algorithm, uses the so-called region-automaton construction, which essentially abstracts
the timed automaton into an exponentially larger finite automaton of regions (collections
of clock valuations), which is sound and complete for reachability.

Despite this complexity-theoretic hardness, the model of timed automata has proved to be
extremely influential and versatile, resulting in an enormous body of work on its theory,
variants and extensions over the past 25 years.  Almost since its inception, researchers
also began to develop tools to extend from theoretical algorithms to solve practical
problems.  Such tools range from the classical and richly featured tool
UPPAAL~\cite{bengtsson1995uppaal,larsen1997uppaal} to the more recent open-source tool
TChecker~\cite{tchecker}, which have been used on industry strength benchmarks and perform
rather well on many of them.  These tools use a different algorithm for reachability,
where reachable sets of valuations are represented as zones and explored in a graph.
While a naive exploration of zones does not terminate, the algorithms used identify
different strategies~\cite{BBLP04,HSW12,HKSW11}, e.g., subsumption or simulations,
extrapolations, for pruning the zone-based exploration graphs, while preserving soundness
and completeness of reachability.  While this does not change the worst case complexity,
in practice, the zone exploration results in much better {\em practical} performance as it
allows on-the-fly computation of reachable zones.  One could even argue that the wider
adoption of timed automata paradigm in the verification community has been a result of
scalable implementations and tools built on this zone-based approach.

In light of this, zone-based algorithms are often looked for to improve practical
performance of extensions of timed automata as well.  For instance, for timed automata
with diagonal constraints, classical zone-based approaches were shown to be
unsound~\cite{Bouyer04,BLR05}, but recently, an approach has been developed which adapts
the existing construction and obtains fast zone-based algorithms~\cite{GMS19}.  In the
present paper, we are interesting in adding a different feature to timed automata, namely
an unbounded lifo-stack.  This results in a powerful model of {\em pushdown timed automata
(PDTA for short)}, in which the source of ``infinity'' is both from real-time and the
unbounded stack.  Unsurprisingly, this model and its variants have been widely studied
over the last 20 years with several old and recent results on decidability of
reachability, related problems and their complexity,
including~\cite{AbdullaAS12,AGJK19,AGK16,AGKR20,AkshayGKS17,bouajjani1994automatic,ClementeL15,CL21,CLLM17,Dang03}.
A wide variety of techniques have been employed to solve these problems, from region-based
abstractions, to using atoms and systems of constraints, to encoding into different logics
etc.  However, except for~\cite{AkshayGKS17,AGKR20}, to the best of our knowledge, none of
the others carry an implementation.  In~\cite{AkshayGKS17}, the implementation uses a
tree-automaton implicitly based on regions and the focus in~\cite{AGKR20} is towards
multi-pushdown systems.  A common factor of all these works is that none of them consider
zone-based abstractions.

In this paper, we ask whether zone-based abstractions can be used to decide efficiently
reachability questions in PDTA. We focus on the problem of well-nested control-state
reachability of PDTA, i.e., given a PDTA, an initial and a target state, does there exist
a run of the PDTA that starts at the initial state with empty stack and reaches the target
state with an empty stack (in between, i.e., during the run, the stack can indeed be
non-empty).  As with timed automata, our goal here is towards its applicability to build
powerful tools which could lead to wider adoption of the PDTA model and showcase its
utility to model-checking timed recursive systems.  As the first step, we examine the
difficulties involved in mixing zones with stacks and point out that a naive adaptation of
the zone-based algorithm would not be sound.  Then we propose a new algorithm that
modifies the zone-based algorithm to work for pushdown timed automata.  This is done in
three steps.
\begin{itemize}[nosep]
  \item First we view the zone-graph exploration at the heart of the zone-based
  reachability algorithm for timed automata as a least fixed point computation of two
  inductive rules.  When applied till saturation, they compute a sound and complete finite
  abstraction of the set of all reachable zones.

  \item Next, this view allows us to generalize the approach in the presence of a stack by
  adding new inductive rules that correspond to push and pop transitions, and hence are
  specific to the stack operation.  There are two main technical difficulties in this.
  First, we need to ensure termination of the fixed point computation, using a strong
  enough pruning condition of the (a priori infinite) zone graph to ensure finiteness,
  while being sound and not adding spurious runs.  Second, we want to aggressively
  prune the graph as much as possible to obtain an efficient zone-exploration algorithm.
  We show how we can minimally change the condition of pruning in the zone exploration
  graph to achieve this delicate balance.  Indeed, in doing so we use a judicious
  combination of the subsumption (or simulation) relation and an equivalence relation for
  obtaining a fixed point computation for PDTA that is terminating, while being sound and
  complete.

  \item Finally, we build new data structures that allow us
  to write an efficient algorithm that implements this fixed point computation.  While
  getting a correct algorithm is relatively simple, to obtain an efficient one, we must
  again encounter and overcome several technical difficulties.
\end{itemize}
We implement our approach to build the first zone-based tool that efficiently solves
well-nested control state reachability for PDTA. Our tool is built on top of existing
infrastructure of TChecker~\cite{tchecker}, an open source tool and benefits from many
existing optimizations.  We perform experiments to show the practical performance of
multiple variants of our algorithm and show how our most optimized version is vastly
better in performance than other variants and of course the earlier region-based approach
on a suite of example benchmarks.

We note that our PDTA model differs slightly from the model considered
in~\cite{AbdullaAS12,AGK16}, as there is no age on stack and time spent on stack cannot be
compared with clocks.  Hence our model is closer to~\cite{bouajjani1994automatic,Dang03}.
However, in~\cite{ClementeL15}, it was shown that these two models are equivalent, more
specifically, the stack can be untimed without loss of expressivity (albeit with an
exponential blowup).  Thus our approach can be applied to the other model as well by just
untiming the stack.  There are other more powerful extensions~\cite{CLLM17,CL21} studied
especially in the context of binary reachability, where only theoretical results are
known.  We also remark that the idea of combining the subsumption relation between zones
with an equivalence relation also occurs while tackling liveness, or Buchi acceptance, in
timed automata.  This has been studied in depth~\cite{Tripakis09,Laarman13,HSTW20}, where
the naive zone-based algorithm does not work, forcing the authors to strengthen the
simulation relation in different ways.  Though these problems are quite different, there
are surprising similarities in the issues faced, as explained in
Section~\ref{sec:problem}.

The structure of the paper is as follows: we start with preliminaries and move on to the
difficulty in using zones and simulation relations in solving reachability in PDTA. Then,
we introduce in Section~\ref{sec:rewrite-rules} our inductive rules for timed automata and
PDTA and show their correctness.  In Section~\ref{sec:algo}, we present our algorithm and
helpful data-structural advancements.  We show the experimental performance in
Section~\ref{sec:experiments} and end with a brief conclusion.

\section{Preliminaries}
\label{sec:prelim}

\subsection{Timed automata}
{Timed automata} extend finite-state automata with a set $X$ of (non-negative) real-valued
variables called \emph{clocks}.  We let $\Phi(X)$ denote the set of constraints $\varphi$
that can be formed using the grammar: $\varphi::= x\sim c\mid x-y\sim c \mid
\varphi\wedge\varphi$, where $x,y\in X$, $c\in \mathbb{N}$, ${\sim}\in \{\leq,\geq,<,>\}$,
where constraints of the form $x\sim c,x-y\sim c$ are called atomic constraints. A clock valuation is a map $v\colon
X\rightarrow \mathbb{R}_{\geq0}$ and is said to satisfy $\varphi$, denoted $v\models
\varphi$, if $\varphi$ evaluates to true when each clock $x\in X$ is replaced with $v(x)$.
For $\delta\in \mathbb{R}^{\geq 0}$, we write $v+\delta$ to denote the valuation defined
as $(v+\delta)(x)=v(x)+\delta$ for all clocks $x$.  For $R\subseteq X$, we write $[R]v$ to
denote the valuation obtained by resetting clocks in $R$, i.e., $([R]v)(x)=0$ if $x\in R$,
and $([R]v)(x)=v(x)$ otherwise.  Finally, $v_0$ is the valuation that sets all
clocks to $0$.

A timed automaton $\cA$ is a tuple $(Q,X,q_0,\Delta,F)$, where $Q$ is a finite set of
states, $X$ is a finite set of clocks, $q_0\in Q$ is an initial state, $F\subseteq Q$ is
the set of final states and $\Delta\subseteq Q\times \Phi(X)\times 2^X\times Q$ is a set
of transitions.  A transition $t\in \Delta$ is of the form $(q,g,R,q')$, where $q,q'$ are
states, $g\in \Phi(X)$ is the guard of the transition and $R\subseteq X$ is the set of
clocks that are reset at the transition.
The semantics of a timed automaton $\cA$ is given as a transition system $\TS(\cA)$ over
configurations.  A configuration is a pair $(q,v)$ where $q\in Q$ is a state and $v$ is a
valuation, with the initial configuration being $(q_0,v_0)$.  The transitions are of two
types.  First, for a configuration $(q,v)$ and $\delta\in \mathbb{R}^{\geq0}$,
$(q,v)\xrightarrow{\delta} (q,v+\delta)$ is a delay transition.  Second, for
$t=(q,g,R,q')\in \Delta$, $(q,v)\xrightarrow{t} (q',v')$ is a discrete transition if
$v\models g$ and $v'=[R](v)$.  A run is an alternating sequence of delays and discrete
transitions starting from the initial configuration, and is said to be accepting if the
last state in the sequence is a final state. A timed automaton is said to be \emph{diagonal-free} if the atomic constraints used in guards do not involve diagonal constraints, i.e., constraints of the form $x-y\sim c$.

\subsection{Reachability, Zones and simulations}
\label{sec:prelim-zone-reachability}
The problem of control-state reachability asks whether a given timed automaton has an
accepting run.  This problem is known to be PSPACE-complete~\cite{alur1994theory}, originally shown
via the so-called region abstraction.  Note that, since $\TS(\cA)$ is infinite, some
abstraction is needed to get an algorithm.  In practice however, the abstraction used to
solve reachability, e.g., in tools such as UPPAAL~\cite{larsen1997uppaal} or TChecker~\cite{tchecker} is the
\emph{zone abstraction}.  A zone $Z$ is defined as a set of valuations defined by a
conjunction of atomic clock constraints.  Given a guard $g$ and reset $R$, we define the
following operations on zones: time elapse $\overrightarrow{Z}=\{v+\delta\mid v\in
Z,\delta\in \mathbb{R}^{\geq 0}\}$, guard intersection $g\cap Z=\{v\in Z\mid v\models g\}$
and reset $[R]Z=\{[R]v\mid v\in Z\}$.  The resulting sets are also zones.
With this, we can define the zone graph $\ZG(\cA)$ as a transition system
obtained
as follows: the nodes are (state, zone) pairs and $(q,Z)\xrightarrow{t}(q',Z')$,
if $t=(q,g,R,q')$ is a transition of $\cA$ and $Z'=\overrightarrow{[R](g\cap
Z)}$.  The initial node is $(q_0,Z_0=\overrightarrow{\{v_0\}})$ and a path in the
zone graph is said to be accepting if it ends at an accepting state.  The zone
graph is known to be sound and complete for reachability, but as the graph may
still be infinite, this does not give an algorithm for solving reachability yet.

To obtain an algorithm, one resorts to different techniques such as extrapolation or simulation.
Here we focus on \emph{simulation relations} which will lead to finite abstractions.
Given a timed automaton $\cA$, a binary relation $\preceq$ on configurations is called a
\emph{simulation} if whenever $(q,v)\preceq (q',v')$, we have $q=q'$ and
\begin{itemize}[nosep]
  \item for each delay $\delta\in \mathbb{R}^{\geq 0}$,
  $(q,v+\delta)\preceq(q,v'+\delta)$ and

  \item for each $t=(q,g,R,q_1)\in\Delta$, if $v\models g$ 
  then $v'\models g$ and $(q_1,[R]v) \preceq (q_1,[R]v')$.
\end{itemize}

We often simply write $v\preceq_q v'$ instead of $(q,v)\preceq(q,v')$.  We can now lift
this to sets $Z,Z'$ of valuations as $Z\preceq_q Z'$ if for all $v\in Z$ there exists
$v'\in Z'$ such that $v\preceq_q v'$.  We say that node $(q,Z)$ is subsumbed by node
$(q,Z')$ when $Z\preceq_q Z'$.  As a consequence we obtain the following lemma.

\begin{lemma}\label{lem:zone-simulation}
  If $(q,Z)\xrightarrow{t}(q_1,Z_1)$ in $\ZG(\cA)$ and $Z\preceq_q Z'$, then
  $(q,Z')\xrightarrow{t}(q_1,Z'_1)$ and $Z_1\preceq_{q_1}Z'_1$.
\end{lemma}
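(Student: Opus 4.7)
The plan is to peel off the three operations composing a zone-graph step -- guard intersection, reset, and time-elapse -- and check that each is compatible with the pointwise lifting of the simulation to sets of valuations, then reassemble. I would fix $t=(q,g,R,q_1)$, so that by definition of $\ZG(\cA)$ we have $Z_1=\overrightarrow{[R](g\cap Z)}$, and take $Z'_1 := \overrightarrow{[R](g\cap Z')}$ as the candidate successor of $(q,Z')$ in $\ZG(\cA)$.

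The core of the proof is a pointwise valuation chase. I would pick an arbitrary $v_1 \in Z_1$ and construct a matching $v'_1 \in Z'_1$ with $v_1 \preceq_{q_1} v'_1$. Unfolding the three operations exhibits $v_1$ as $[R]v + \delta$ for some $v \in Z$ with $v \models g$ and some $\delta \geq 0$. Since $Z \preceq_q Z'$, there is some $v' \in Z'$ with $v \preceq_q v'$. Applying the discrete-transition clause of the simulation to $t$ then yields both $v' \models g$ and $[R]v \preceq_{q_1} [R]v'$, and applying the delay clause to $\delta$ at state $q_1$ gives $[R]v + \delta \preceq_{q_1} [R]v' + \delta$. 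Setting $v'_1 := [R]v' + \delta$, which lies in $Z'_1$ by construction, finishes the witness. Since the same $v'$ also shows that $Z'_1$ is nonempty, the transition $(q,Z') \xrightarrow{t} (q_1, Z'_1)$ really exists in $\ZG(\cA)$.

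I do not expect a genuinely hard part: the simulation axioms are phrased in exactly the form needed for this lifting, and the whole argument is an unfolding followed by two applications of the simulation clauses in the right order. The only point to watch is the ordering -- first the discrete-transition clause absorbs the guard and reset in a single step, then the delay clause handles the subsequent time-elapse -- mirroring the $\overrightarrow{[R](g\cap\cdot)}$ structure of the zone-graph transition.
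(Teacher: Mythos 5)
Your proposal is correct and follows essentially the same argument as the paper: fix $Z'_1=\overrightarrow{[R](g\cap Z')}$, take an arbitrary $v_1=[R]v+\delta\in Z_1$, use $Z\preceq_q Z'$ to get a matching $v'\in Z'$, and apply the discrete-transition and delay clauses of the simulation to obtain $v_1\preceq_{q_1}[R]v'+\delta\in Z'_1$. Your explicit remark that the same witness shows $Z'_1\neq\emptyset$ is a small point the paper leaves implicit, but otherwise the two proofs coincide step for step.
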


\begin{proof}
  Indeed, let $v_1\in Z_1=\overrightarrow{[R](g\cap Z)}$.  We find $v\in Z$ and
  $\delta\geq0$ such that $v\models g$ and $v_1=[R]v+\delta$.  Since $Z\preceq_q Z'$, we
  find $v'\in Z'$ with $v\preceq_q v'$.  We deduce that $v'\models g$ and
  $[R]v\preceq_{q_1}[R]v'$, which implies $v_1\preceq_{q_1}v'_1$ with
  $v'_1=[R]v'+\delta\in Z'_1=\overrightarrow{[R](g\cap Z')}$.
\end{proof}

A simulation $\preceq$ is said to be {\em finite} if for every sequence of nodes
$(q_1,Z_1)$, $(q_2,Z_2),\ldots$ there exist two nodes $(q_i,Z_i)$ and $(q_j,Z_j)$ with
$i<j$ such that $q_i=q_j$ and $Z_j\preceq_{q_i}Z_i$.  The importance of the finiteness is
that it allows us to stop exploration of zones along a branch of the zone graph: when a
node $(q_j,Z_j)$ is reached which is subsumed by an earlier node $(q_i,Z_i)$, we may cut
the exploration since all control states reachable from the latter are already reachable
from the former.  For a timed automaton $\cA$, we call this pruned graph as
$ZG_\preceq(A)$.  Thus, if the simulation relation $\preceq$ is finite, then
$ZG_\preceq(\cA)$ is finite, sound and complete for control state reachability.  
We formalize this algorithm in Section~\ref{sec:rewrite-rules}, using inductive rules.

Various finite simulations have been shown to exist in the literature, including the
famous LU-abstractions~\cite{BBLP04} for diagonal-free timed automata, and more recent $\mathcal{G}$-abstractions based on sets of guards \cite{GMS19} for timed automata with diagonal constraints.  Hence this theory indeed has resulted in better implementations and is used in standard tools in this domain.

We will see that using simulation in the context of pushdown timed automata is not always 
sound, in some cases we need a stronger condition to stop the exploration. Towards this, 
we consider the equivalence relation on nodes induced by the simulation relation:
$Z\sim_q Z'$ if $Z\preceq_q Z'$ and $Z'\preceq_q Z$. We say that the 
simulation $\preceq$ is strongly finite if the induced equivalence relation $\sim$ has 
finite index. Notice that strongly finite implies finite but the converse does not 
necessarily hold. Fortunately, the usual simulations for timed automata, in particular 
the LU-simulation and the $\mathcal{G}$-simulation, are strongly finite.

\subsection{Pushdown timed automata (PDTA)}
A Pushdown Timed Automaton $\cA$ is a tuple $(Q,X,q_0,\Gamma,\Delta,F)$, where $Q$ is a finite set of states, $X$ is
a finite set of clocks, $q_0\in Q$ is an initial state, $\Gamma$ is the stack alphabet,
$F\subseteq Q$ is the set of final states and $\Delta$ is a set of transitions.  A
transition $t\in \Delta$ is of the form $(q,g,\op,R,q')$, where $q,q'$ are states, $g\in
\Phi(X)$ is the guard of the transition and $R\subseteq X$ is the set of clocks that are
reset at the transition, $\op$ is one of three stack operations: $\nop$ or $\push_a$ or
$\pop_a$ for some $a\in \Gamma$.

The semantics of a PDTA $\cA$ is given as a transition system $\TS(\cA)$ over
configurations.  A configuration here is a tuple $(q,v,\chi)$ where $q\in Q$ is a state,
$v$ is a valuation, $\chi\in \Gamma^*$ is the stack content, with the initial
configuration being $(q_0,v_0,\varepsilon)$.  The transitions are of two types.  First,
for a configuration $(q,v,\chi)$ and $\delta\in \mathbb{R}^{\geq0}$,
$(q,v,\chi)\xrightarrow{\delta} (q,v+\delta,\chi)$ is a delay transition.  Second, for
$t=(q,g,\op,R,q')\in \Delta$, $(q,v,\chi)\xrightarrow{t} (q',v',\chi')$ is a discrete
transition if $v\models g$, $v'=[R](v)$ and
\begin{itemize}[nosep]
\item if $\op=\nop$, then $\chi'=\chi$,
\item if $\op=\push_a$ then $\chi'=\chi \cdot a$,
\item if $\op=\pop_a$, then $\chi=\chi'\cdot a$.
\end{itemize}
A run is an alternating sequence of delays and discrete actions starting from the initial
configuration. It is accepting if the last state in the sequence is final.

Our main focus is the {\em well-nested control state reachability} problem for PDTA, which
asks whether a configuration $(q,v,\varepsilon)$ with $q\in F$ is reachable, where the
stack is empty.  Later, in Section~\ref{sec:conclusion}, we remark how our solution can be
extended to solve general control state reachability, i.e., asking whether a configuration
$(q,v,\chi)$ with $q\in F$ is reachable, possibly with a nonempty stack $\chi$.

\section{Zones in PDTA and the problem with simulations}
\label{sec:problem}
As mentioned earlier, zones are collections of clock valuations defined by conjunctions of
timing constraints, and exploring zones reached by a timed automaton gives a sound and
complete abstraction for state reachability.  To make sure that the exploration is finite
we need to prune the graph and one way this is done by simulation, i.e., not exploring
paths from some nodes if they are ``subsumed'' by earlier nodes visited in the graph.
Consider Figure~\ref{fig:tpda}, in which we ignore the $\push_a$ and $\pop_a$ or we can
think of them as internal actions.  Then the usual zone-graph construction with
simulation would give the graph depicted in Figure~\ref{fig:zones}.  In this section,
just for illustration we instantiate the simulation relation to be the well-known
LU-simulation (we do not give the definition here as it is not relevant to what comes
later, instead we refer to earlier work~\cite{BBLP04}).  Using this, we obtain that the
rightmost node is subsumed by the previous one, 
and hence the dotted simulation edge.  If we did not do this we immediately observe that
we get an infinite graph with increasing sets of zones.

Now, our first question is whether this zone exploration with simulation can be lifted to
PDTA. In this example, if we were to add back the push/pop edges, we get
exactly the same Zone graph with annotations, and further, the final state is indeed
reachable.  Hence, for this particular example we do obtain a finite, sound and complete
graph exploration.  However, in general it turns out that the procedure is not sound.

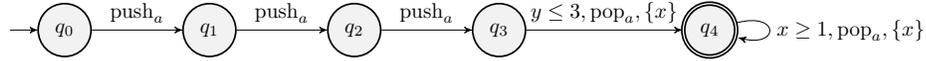
\begin{figure}[tb]
  \centering
\scalebox{0.8}{
  \begin{tikzpicture}[node distance = 3.5cm]
    \node[state, initial] (q0) {$q_0$};
    \node[state, right=1.5cm of q0] (q1) {$q_1$};
    \node[state, right=1.5cm of q1] (q2) {$q_2$};
    \node[state, right=1.5cm of q2] (q3) {$q_3$};
    \node[state, accepting, right of=q3] (q4) {$q_4$};

    \draw (q0) edge[] node[above]{$\push_a$} (q1)
    (q1) edge[] node[above]{$\push_a$} (q2)
    (q2) edge[] node[above]{$\push_a$} (q3)
    (q3) edge[] node[above]{$y\leq 3,\pop_a,\{x\}$} (q4)
    (q4) edge[loop right] node[right]{$x\geq 1 ,\pop_a,\{x\}$} (q4);
  \end{tikzpicture}
  }
\caption{A simple PDTA with 2 clocks $\{x,y\}$. Note that if we ignore the push/pop actions we get a TA, say $A$.}
\label{fig:tpda}
\end{figure}
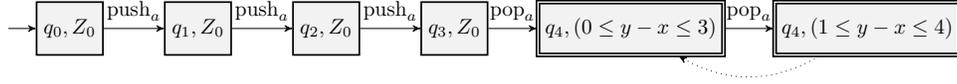
\begin{figure}[tb]
\centering
\hspace{-0.5cm}
\scalebox{0.8}{
\begin{tikzpicture}[node distance = 4cm]
    \node[state, rectangle, initial] (q0) {$q_0,Z_0$};  
    \node[state, rectangle, right=1cm of q0] (q1) {$q_1,Z_0$};
    \node[state, rectangle, right=1cm of q1] (q2) {$q_2,Z_0$};
    \node[state, rectangle, right=1cm of q2] (q3) {$q_3,Z_0$};
    \node[state, rectangle, accepting, right=0.8cm of q3] (q4) {$q_4,(0\leq y-x\leq 3)$};
    \node[state, rectangle, accepting, right=0.8cm of q4] (q5) {$q_4,(1\leq y-x\leq 4)$};

    \draw (q0) edge[] node[above]{$\push_a$} (q1)
    (q1) edge[] node[above]{$\push_a$} (q2)
    (q2) edge[] node[above]{$\push_a$} (q3)
    (q3) edge[] node[above]{$\pop_a$} (q4)
    (q4) edge[] node[above]{$\pop_a$} (q5)
    ;
    \draw[dotted] (q5) edge[bend left] node[left]{} (q4);
\end{tikzpicture}
}
\caption{Zone graph with simulation edges for finiteness.  Again ignoring push/pop actions
gives us a zone graph for the TA. $Z_0$ is the initial zone.}
\label{fig:zones}
\end{figure}

Consider the example in Figure~\ref{fig:tpda2}.  In this example, again considering it as
a TA (ignoring the push/pops), we would get the zone graph below, which would be finite,
sound and complete for reachability in that TA. But if we consider it as a PDTA, now doing
the same does not preserve soundness.  In other words, in the PDTA, $q_3$ is no longer
reachable.  However, in the zone graph we would conclude that it is reachable due to the
simulation edge.  If, to fix this, we remove the dotted simulation edge, then we will
lose finiteness.

\begin{figure}[tb]
\scalebox{0.8}{
  \begin{tikzpicture}[node distance = 3cm]
    \node[state, initial] (q0) {$q_0$};
    \node[state, right=1cm of q0] (q2) {$q_2$};
    \node[state, accepting, right=1cm of q2] (q3) {$q_3$};
    \node[state, below=2cm of q0] (q1) {$q_1$};

    \draw (q0) edge[bend left] node[right]{$x\geq1,\{x\}$} (q1)
    (q1) edge[bend left] node[left]{$y\leq1,\push_a$} (q0)
    (q0) edge[] node[above]{$\pop_a$} (q2)
    (q2) edge[] node[above]{$\pop_a$} (q3);
  \end{tikzpicture}
  }
\scalebox{0.8}{
\begin{tikzpicture}[node distance = 3cm]
  \node[state, rectangle, initial](q0) {$q_0,Z_0$};
    \node[state, rectangle, right=1cm of q0] (q2) {$q_2,Z_0$};
    \node[state, rectangle, below=2cm of q0] (q1) {$q_1,y-x\geq1$};
    \node[state, rectangle, accepting, right=1cm of q2] (q3) {$q_3,Z_0$};
    \node[state, rectangle, right=2cm of q1] (q4) {$q_0,y-x=1$};

    \draw (q0) edge[] node[right]{} (q1)
    (q1) edge[] node[above]{$\push_a$} (q4)
    (q0) edge[] node[above]{$\pop_a$} (q2)
    (q2) edge[] node[above]{$\pop_a$} (q3);
    \draw[dotted] (q4) edge[] node[left]{} (q0);
\end{tikzpicture}
}
\caption{A PDTA and its zone graph with simulation.  With the simulation (dotted)
edges, $q_3$ is reachable in the zone graph, but its not reachable in the PDTA.}
\label{fig:tpda2}
\end{figure}
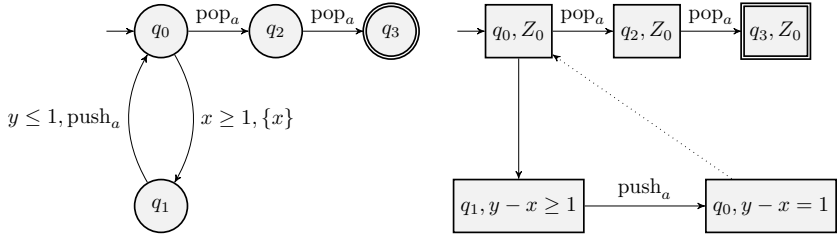

Thus, it seems that we have a difficult situation where zones with the simulation
relation, needed for termination, do not preserve soundness.  This situation resembles the
situation studied in~\cite{Tripakis09,Laarman13,HSTW20}, where the authors study liveness
or Buchi-acceptance conditions in timed automata.  Again in that situation, the naive
algorithm with zone simulation does not work and the authors are forced to strengthen the
simulation relation in different ways.

Surprisingly, it turns out, that even in our very different problem setting of
reachability in PDTA, a similar solution works.  That is, we replace simulation by
equivalence (defined in the previous section) as the pruning criterion.  However, there
are two issues (i) it is not easy to prove its correctness and (ii) this is far from
efficient as shown in the experimental section.  Our goal to use zones in the first place
was efficiency and hence we would like to prune the zone graph as much as possible, i.e.,
we would like to use simulation edges as much as possible.  In the next two sections, we
describe our fix.  We first show a different view of the exploration algorithm as a fixed
point rule based approach.  This allows us to then describe our fix in the same language,
which is much easier to understand conceptually.  Also as a corollary we will be able to
show that using equivalence everywhere also gives a correct algorithm.  After proving the
correctness of our rule-based algorithm, we then tackle the challenges in implementing it.

\section{Viewing reachability algorithms using rewrite rules}\label{sec:rewrite-rules}

In this section, our goal is to compute a set $S$ of nodes of the zone graph of a PDTA, as
a least fixed point of a small set of inductive rules, such that a control state $q$
occurs in $S$, i.e., $(q,Z)\in S$ for some $Z$ iff $q$ is reachable in the PDTA from its
initial state.  To understand the rules and their correctness it is easier to first
visualize this on plain timed automata without any push-pop edges.

\subsection{Rewrite rules for Timed automata.}
Given a TA $\cA=(Q,X,q_0,\Delta,F)$, the set $S$ containing all reachable
nodes of the zone graph, can be obtained as the least fixed point of the following
inductive rules, with a natural deduction style of presentation.

\begin{prooftree}
  \AxiomC{}
  \RightLabel{\scriptsize{start}}
  \UnaryInfC{$S:=\{(q_0,Z_0)\}$}
\end{prooftree}

\begin{prooftree}
  \AxiomC{$(q,Z)\in S$}
  \AxiomC{$q\xrightarrow{g,R}q'$}
  \AxiomC{$Z'=\overrightarrow{R(g\cap Z)}\neq\emptyset$}
  \RightLabel{\scriptsize{Trans}}
  \TrinaryInfC{$S:=S\cup \{(q',Z')\}$}
\end{prooftree}

Let $S^*$ denote the set at the fixed point by starting with the start rule and repeatedly
applying the trans rule.  It is easy to see that this computes the set of all reachable
nodes of the zone graph: the start rule starts with the initial node and each application
of trans rule takes a reachable node and applies a transition of the automaton and
includes the resulting node reached.  However, this set $S^*$ is a priori infinite since
number of zones is infinite.

To make it finite we add a condition under which we will apply the transition rule based on a finite simulation relation (let us denote it $\preceq$) for $\cA$.
\begin{prooftree}
  \AxiomC{$(q,Z)\in S$}
  \AxiomC{$q\xrightarrow{g,R}q'$}
  \AxiomC{$Z'=\overrightarrow{R(g\cap Z)}\neq\emptyset$}
  \RightLabel{\scriptsize{Trans-$\preceq$}}
  \TrinaryInfC{$S:=S\cup \{(q',Z')\}$,
  \Red{unless $\exists (q',Z'')\in S$, $Z'\preceq_{q'}Z''$}}
\end{prooftree}

Thus to obtain an algorithm, we would explore all nodes in the Zone graph using a search
algorithm (say DFS/BFS) and we would add a node only if it is not subsumed by an already
visited node, according to the simulation relation.  We explained in
Section~\ref{sec:prelim-zone-reachability} that doing this preserves soundness and
completeness and gives a finite exploration.
        
\begin{lemma}\label{lem:tarules}
  Let $S_\preceq^*$ denote any set obtained from the start rule and by repeatedly applying
  Trans-$\preceq$ till a fixed point is reached.  Note that depending on the order of
  applications we may have different sets.  Then we have:
  \begin{enumerate}
    \item (finiteness) $S_\preceq^*$ is finite.

    \item (soundness and completeness) For all $q\in Q$, a configuration $(q,v)$ is
    reachable from $(q_0,v_0)$ in the TA $\cA$ iff $(q,Z)\in S_\preceq^*$ for some
    zone $Z$.
  \end{enumerate}
\end{lemma}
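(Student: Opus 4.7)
The plan is to handle the three claims in order: finiteness, then soundness, then completeness, since the first is essentially bookkeeping and the third is the most delicate.

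For finiteness, I would argue by contradiction. Suppose $S_\preceq^*$ is infinite, and enumerate its elements in the order they were inserted by the rules as an infinite sequence $(q_1,Z_1),(q_2,Z_2),\ldots$. Since $\preceq$ is assumed to be a finite simulation, the defining property gives indices $i<j$ with $q_i=q_j$ and $Z_j\preceq_{q_i}Z_i$. But $(q_j,Z_j)$ was added by Trans-$\preceq$ at step $j$, and at that moment $(q_i,Z_i)$ was already in $S$, so the ``unless'' side condition of Trans-$\preceq$ was triggered and $(q_j,Z_j)$ could not have been added, a contradiction.

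For soundness, I would prove by induction on the derivation tree of ``$(q,Z)\in S_\preceq^*$'' the stronger invariant that every $v\in Z$ corresponds to a reachable configuration $(q,v)$ in $\TS(\cA)$. The base case (start rule) holds because $Z_0=\overrightarrow{\{v_0\}}$ consists exactly of valuations obtained from $(q_0,v_0)$ by a single delay. For the inductive step, a node $(q',Z')$ produced by Trans-$\preceq$ from $(q,Z)$ via a transition $t=(q,g,R,q')$ satisfies $Z'=\overrightarrow{[R](g\cap Z)}$; any $v'\in Z'$ decomposes as $v'=[R]v+\delta$ for some $v\in Z$ with $v\models g$ and $\delta\ge 0$, so the induction hypothesis gives a run to $(q,v)$, which we extend by $t$ and then by a $\delta$-delay to obtain a run to $(q',v')$.

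Completeness is where the real work lies. I would strengthen the statement to the following invariant: for every node $(q,Z)$ reachable in the (full, unpruned) zone graph $\ZG(\cA)$, there exists $(q,Z'')\in S_\preceq^*$ with $Z\preceq_q Z''$. This, combined with the standard soundness/completeness of the full zone graph for reachability in $\TS(\cA)$, yields the ``only if'' direction of the lemma. I would prove the invariant by induction on the length of the path from $(q_0,Z_0)$ to $(q,Z)$ in $\ZG(\cA)$. The base case is immediate since $(q_0,Z_0)\in S_\preceq^*$. For the induction step, consider a zone-graph transition $(q,Z)\xrightarrow{t}(q_1,Z_1)$; by the induction hypothesis there is $(q,Z'')\in S_\preceq^*$ with $Z\preceq_q Z''$, and Lemma~\ref{lem:zone-simulation} applied to $t$ gives a transition $(q,Z'')\xrightarrow{t}(q_1,Z_1'')$ with $Z_1\preceq_{q_1}Z_1''$. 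Since $S_\preceq^*$ is a fixed point of Trans-$\preceq$, either $(q_1,Z_1'')\in S_\preceq^*$, or the unless clause fired, giving $(q_1,Z^\star)\in S_\preceq^*$ with $Z_1''\preceq_{q_1}Z^\star$; in both cases transitivity of $\preceq$ supplies the required subsuming element for $Z_1$. The main obstacle to watch out for here is exactly this case analysis at the fixed point, because one has to be careful that ``fixed point'' really means that every triple $(q,Z),t,Z'$ fed into Trans-$\preceq$ has been processed regardless of the order chosen, and that transitivity of $\preceq_{q}$ is actually available, but both follow from the simulation definition and the standard reflexive-transitive closure structure of $\preceq$.
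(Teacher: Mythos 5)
Your proof is correct, but it is worth noting that the paper deliberately gives no direct proof of this lemma: it declares the statement a reformulation of known results, subsumed by Theorem~\ref{thm:tpdarules}, whose proof (restricted to the internal rule) works at the level of \emph{concrete runs}: completeness is an induction on the length of a run $(q,v)\rightarrow\cdots\rightarrow(q',v')$, carrying the invariant $\{v'\}\preceq_{q'}Z'$, and soundness is an invariant asserting reachability only \emph{up to simulation} ($v'\preceq_{q'}v''$ for some reachable $v''$). You instead decompose differently: your completeness is an induction on path length in the \emph{unpruned} zone graph $\ZG(\cA)$, invoking Lemma~\ref{lem:zone-simulation} to transport each zone-graph edge along a subsuming node in $S_\preceq^*$, and then appealing to the standard correctness of $\ZG(\cA)$ as a black box. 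Your soundness is in fact \emph{stronger} than what Theorem~\ref{thm:tpdarules} gives: you show every valuation in every stored zone is exactly reachable, which is available here because Trans-$\preceq$ stores exact successor zones --- this exactness is precisely what breaks in the PDTA setting, where the pop rule matches roots only up to equivalence, which is why the paper's more general argument must weaken soundness to reachability modulo $\preceq$. Your finiteness argument (enumerate insertions, apply the finite-simulation property to that sequence, contradict the \emph{unless} clause) is also the right one and is more explicit than the paper's corresponding one-line assertion in Theorem~\ref{thm:tpdarules}(1). In short: your route is more elementary and self-contained for the TA case; the paper's route generalizes to the stack.

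One small caveat to tighten: transitivity of the lifted relation $\preceq_q$ does \emph{not} follow from the bare definition of a simulation, contrary to your closing remark --- an arbitrary simulation need not be a preorder. You use transitivity essentially (chaining $Z_1\preceq_{q_1}Z''_1\preceq_{q_1}Z^\star$), so you should either assume $\preceq$ is a preorder (true of the LU- and $\mathcal{G}$-simulations the paper actually instantiates, and implicitly assumed in the paper's own chaining steps such as $v_n\preceq_{q_n}Z''\preceq_{q_n}Z'''$) or replace $\preceq$ by its reflexive-transitive closure, observing that this closure is again a finite simulation since it contains $\preceq$. With that one-line repair, the argument is complete.
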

We do not give the proof here as (i) it is only a reformulation of known
results and (ii) it will be subsumed by the much stronger theorem we prove next.

\subsection{Rewrite rules for PDTA.}\label{sec:rules-PDTA}
Let $\cA=(Q,X,q_0,\Gamma,\Delta,F)$ be a PDTA, we will need not just a
set but a tuple of sets.  More precisely, we maintain a set of nodes $\cS$ called
\emph{root} nodes.  For each root node $(q,Z)\in \cS$, we also maintain a set of nodes,
denoted $S_{(q,Z)}$.  The intuition is that root nodes are those that can be reached after
pushing a symbol to the stack, whereas $S_{(q,Z)}$ will be the set of nodes that can be
reached from $(q,Z)$ with a well-nested run, i.e., starting with an empty stack and ending
in an empty stack.  This is to avoid storing the stack contents in our algorithm,
which would be another source of infinity.  Again, we use simulations to make the
computation finite.  So we fix a strongly finite simulation relation $\preceq$ for $\cA$.

\begin{table}[tbh]
  \centering
  \begin{prooftree}
    \AxiomC{}
    \RightLabel{\scriptsize{Start}}
    \UnaryInfC{$\cS:=\{(q_0,Z_0)\}$, $S_{(q_0,Z_0)}:=\{(q_0,Z_0)\}$}
  \end{prooftree}  
  
  \begin{prooftree}
    \AxiomC{$(q,Z)\in \cS$}
    \AxiomC{$(q',Z')\in S_{(q,Z)}$}
    \AxiomC{$q'\xrightarrow{g,\nop,R}q''$}
    \AxiomC{$Z''=\overrightarrow{R(g\cap Z')}\neq\emptyset$}
    \RightLabel{\scriptsize{Internal}}
    \QuaternaryInfC{$S_{(q,Z)}:=S_{(q,Z)}\cup \{(q'',Z'')\}$, 
    \Red{unless $\exists (q'',Z''')\in S_{(q,Z)}$, $Z''\preceq_{q''}Z'''$}}
  \end{prooftree}
  
  \begin{prooftree}
    \AxiomC{$(q,Z)\in \cS$}
    \AxiomC{$(q',Z')\in S_{(q,Z)}$}
    \AxiomC{$q'\xrightarrow{g,\push_a,R}q''$}
    \AxiomC{$Z''=\overrightarrow{R(g\cap Z')}\neq\emptyset$}
    \RightLabel{\scriptsize{Push}}
    \QuaternaryInfC{$\cS:=\cS\cup \{(q'',Z'')\}$, $S_{(q'',Z'')}=\{(q'',Z'')\}$, 
    \Red{unless $\exists (q'',Z''')\in \cS$, $Z''\sim_{q''}Z'''$}}
  \end{prooftree}
  
  \begin{prooftree}
  \def\defaultHypSeparation{\hskip 0.2cm}
    \alwaysNoLine
    \AxiomC{$(q,Z)\in \cS$}
    \UnaryInfC{$(q'',Z_1)\in \cS$}
    \AxiomC{$(q',Z')\in S_{(q,Z)}$}
    \UnaryInfC{$(q'_1,Z'_1)\in S_{(q'',Z_1)}$}
    \AxiomC{$q'\xrightarrow{g,\push_a,R}q''$}
    \UnaryInfC{$q'_1\xrightarrow{g_1,\pop_a,R_1}q_2$}
    \AxiomC{$Z''=\overrightarrow{R(g\cap Z')}\sim_{q''}Z_1$}
    \UnaryInfC{$Z_2=\overrightarrow{R_1(g_1\cap Z'_1)}\neq\emptyset$}
    \alwaysSingleLine
    \RightLabel{\scriptsize{Pop}}
    \QuaternaryInfC{$S_{(q,Z)}:=S_{(q,Z)}\cup \{(q_2,Z_2)\}$,
    \Red{unless $\exists (q_2,Z'_2)\in S_{(q,Z)}$, $Z_2\preceq_{q_2}Z'_2$}}
  \end{prooftree}

  \caption{Inductive rules for well-nested control state reachability of PDTA}
  \label{tbl:rules-tpda}
\end{table}

Our inductive rules for well-nested control state reachability of pushdown timed automata are
given in Table~\ref{tbl:rules-tpda}.  Note that the internal rule is just the same as for
timed automata above.  The start rule not only starts the set of nodes computation but
also the set of roots computation as described above.  So the only interesting rules are
the Push and Pop rules.  The push rule says that when a push is encountered, then we must
start exploring from a new root (i.e., context).  So the only complicated rule is the Pop
rule.  Here the intuition is that if we see a push at a node and from a root equivalent to
the root created from it, (i.e., its context) we see a matching pop reaching a new node,
then this push-pop context is complete, and we can add this new node to the set of
reachable nodes.  This is precisely the point where we {\em need} equivalence rather than
simulation and this will be made clear in the proof of the theorem below.

\begin{theorem}\label{thm:tpdarules}
  Let $\cS^*$ and $(S_{(q,Z)})_{(q,Z)\in\cS^{*}}$ denote any tuple of sets obtained from
  the start rule and by repeatedly applying the rules in Table~\ref{tbl:rules-tpda} till a
  fixed point is reached\footnote{As before, there could be several such sets depending on
  the order in which the rules are applied.}.  Note that we always have
  $(q_0,Z_0)\in\cS^*$.  The following statements hold:
  \begin{enumerate}[nosep]
    \item (finiteness) $\cS^*$ is finite and for each
    $(q,Z)\in \cS^*$, $S_{(q,Z)}$ is finite.
    
    \item (completeness) For each $(q,Z)\in \cS^*$, if there exists a run
    $(q,v,\varepsilon)\xrightarrow{*} (q',v',\varepsilon)$ of $\cA$ with $\{v\}\preceq_q
    Z$, then there exists $(q',Z')\in S_{(q,Z)}$ such that $\{v'\}\preceq_{q'} Z'$.
    
    \item (soundness) For each $(q,Z)\in \cS^*$, $(q',Z')\in S_{(q,Z)}$ and $v'\in Z'$,
    there exists a run in PDTA from $(q,v,\varepsilon)$ to $(q',v'',\varepsilon)$ with
    $v\in Z$ and $v'\preceq_{q'}v''$.

  \end{enumerate}
\end{theorem}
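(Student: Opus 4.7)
For finiteness I rely directly on the stopping conditions. The Push rule never adds a root when a $\sim$-equivalent one is present, so $\cS^*$ contains at most one representative per $\sim_{q}$-class per control state, which gives a finite total because $\preceq$ is strongly finite. For each $S_{(q,Z)}$, the Internal and Pop rules refuse to add a node $\preceq$-dominated by one already in $S_{(q,Z)}$, so an infinite $S_{(q,Z)}$ would contradict the finiteness of $\preceq$ in the sense of Section~\ref{sec:prelim-zone-reachability}.

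For completeness, I would induct on the length of a well-nested run $(q,v,\varepsilon)\xrightarrow{*}(q',v',\varepsilon)$ under the assumption $\{v\}\preceq_q Z$. The base case is immediate since $(q,Z)\in S_{(q,Z)}$ is ensured either by Start or by the initialisation inside Push. For the inductive step, decompose the run at the first discrete transition. If it is internal, Lemma~\ref{lem:zone-simulation} transports the simulation through the transition and the Internal rule either fires or is blocked by a $\preceq$-dominating node in $S_{(q,Z)}$. If it is a push, split the run as (push) $\cdot$ (well-nested sub-run that leaves the freshly pushed symbol untouched) $\cdot$ (matching pop) $\cdot$ (tail). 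The push produces zone $Z''=\overrightarrow{R(g\cap Z')}$, and by the Push rule $\cS^*$ contains some $(q'',Z_1)$ with $Z''\sim_{q''} Z_1$; the direction $Z''\preceq_{q''} Z_1$ of this equivalence ensures that the actual post-push valuation is dominated by $Z_1$, so induction on the inner sub-run produces $(q'_1,Z'_1)\in S_{(q'',Z_1)}$ dominating the pop-source valuation. Lemma~\ref{lem:zone-simulation} then lets the Pop rule fire (or provides a $\preceq$-dominating node), and induction on the tail finishes the case.

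For soundness, I would induct on the order in which nodes are added, showing that each rule application comes with a PDTA run realising any $v'\in Z'$ up to $\preceq_{q'}$. Start gives the trivial run and Internal extends a previous run by one transition, pulling the witness back through $\preceq$ using the definition of simulation on discrete transitions. The Pop rule is the main obstacle, and is precisely where equivalence (not just simulation) at Push is needed. The induction hypothesis on $(q'_1,Z'_1)\in S_{(q'',Z_1)}$ yields a PDTA run starting from some $v^{(1)}\in Z_1$, but to splice this after the push we need a push-target valuation living in $Z''$ (not $Z_1$) that dominates $v^{(1)}$. This is supplied by the other direction $Z_1\preceq_{q''}Z''$ of the equivalence: pick $v^{**}\in Z''$ with $v^{(1)}\preceq_{q''}v^{**}$, realise $v^{**}$ as the push image of a witness for $(q',Z')\in S_{(q,Z)}$ provided by induction, then simulation mimics the inner run from $v^{**}$ while the freshly pushed symbol stays at the bottom of the stack, and the pop closes the glued run. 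Without this direction of the equivalence the splicing would break, matching the counterexample of Section~\ref{sec:problem}.
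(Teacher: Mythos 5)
Your proposal is correct and follows the paper's proof in almost all essentials: finiteness is argued exactly as in the paper (strong finiteness of $\sim$ bounds $\cS^*$, finiteness of $\preceq$ bounds each $S_{(q,Z)}$), and your soundness argument --- induction over rule applications, with the Pop case spliced using the direction $Z_1\preceq_{q''}Z''$ of the equivalence to lift the inner run's starting valuation into the actual push-successor zone $Z''$ --- is precisely the paper's invariant \eqref{eq:invariant} and the construction of Figure~\ref{fig:sound}, including the correct identification of which direction of $\sim$ is needed where (the opposite direction $Z''\preceq_{q''}Z_1$ serving completeness). The one genuine structural difference is in the completeness induction: the paper cases on the \emph{last} transition of the run (time elapse, internal, or pop, locating the matching push backwards), which keeps every appeal to the induction hypothesis in exactly the theorem's form, namely a well-nested run starting \emph{at a root} with a dominated valuation. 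You instead split at the \emph{first} discrete transition into (push)$\cdot$(inner run)$\cdot$(matching pop)$\cdot$(tail). This works, but as literally stated your induction hypothesis (runs from $(q,v,\varepsilon)$ with $\{v\}\preceq_q Z$ for a root $(q,Z)$) does not apply to the tail: after the pop, the continuation starts from a configuration dominated by a non-root node $(q_2,Z'_2)\in S_{(q,Z)}$, and $(q_2,Z'_2)$ need not be in $\cS^*$. You must strengthen the statement to: for every $(q,Z)\in\cS^*$, every $(q_1,Z_1)\in S_{(q,Z)}$, and every well-nested run from $(q_1,v_1,\varepsilon)$ with $\{v_1\}\preceq_{q_1}Z_1$, the endpoint is dominated by some node of $S_{(q,Z)}$; the theorem is the instance $(q_1,Z_1)=(q,Z)$, and with this strengthening your forward decomposition goes through. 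You should also treat delay steps explicitly as the paper does: all zones produced by the rules are time-closed, so domination is preserved under elapse. What each approach buys: the paper's last-transition split avoids the strengthened hypothesis at the price of identifying the matching push backwards; your forward split is arguably the more natural reading of well-nestedness but pays with the stronger invariant.
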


\begin{proof}
    \textbf{1.} Note that only the Push rule creates new root nodes and the red condition
    states that a new root node is added only if there isn't already an equivalent node in
    $\cS^*$.  Since the simulation relation is strongly finite, the set of roots $\cS^{*}$
    must be finite.  Also, before adding a node to some $S_{(q,Z)}$ with the internal rule
    or the pop rule, we check that the node is not subsumed by an existing one.  Since the
    simulation relation is finite, this ensures that each set $S_{(q,Z)}$ is finite.
        
    \smallskip\noindent\textbf{2.} 
    Let $(q,Z)\in \cS^*$ and assume that $(q',v',\varepsilon)$ is reachable from some
    $(q,v,\varepsilon)$ with $v\preceq_q Z$, i.e., there exists a run
    $(q,v,\varepsilon)=(q_1,v_1,\chi_1)\rightarrow\cdots
    \rightarrow(q_n,v_n,\chi_n)=(q',v',\varepsilon)$.  We will then show that
    $v_n\preceq_{q_n}Z'$ for some $(q_n,Z')\in S_{(q,Z)}$.  The proof is by induction on
    $n$.
      Base case: For $n=1$ we have $q'=q$ and $v'=v$.  The result is obtained by
      taking $Z'=Z$.  Notice that $(q,Z)\in S_{(q,Z)}$ follows immediately from the start
      rule if $q=q_0$, $Z=Z_0$ or from the push-create rule.
      
      Let us then assume that the statement holds for runs of length at most $n-1$.
      Consider any run of the form $(q,v,\varepsilon)=(q_1,v_1,\chi_1)\rightarrow\cdots
      \rightarrow(q_n,v_n,\chi_n=\varepsilon)$ with $v\preceq_q Z$.  Notice that its
      last transition $(q_{n-1},v_{n-1},\chi_{n-1})\rightarrow (q_n,v_n,\chi_n)$ cannot be
      a push transition (in the PDTA) since $\chi_n=\varepsilon$.  Hence, we have three
      subcases, depending on the last transition.
      \begin{itemize}[nosep]
        \item Time elapse. $\chi_{n-1}=\chi_n=\varepsilon$,
        $q_{n-1}=q_n=q'$, $v_n=v_{n-1}+\delta$ for some $\delta\in \mathbb{R}^{\geq 0}$.
        Applying induction hypothesis, we have $v_{n-1}\preceq_{q'}Z'$ for some
        $(q',Z')\in S_{(q,Z)}$.  Since zones are closed under time elapse, we get
        $Z'=\overrightarrow{Z'}$ and by definition of the simulation relation
        $v_n=v_{n-1}+\delta\preceq_{q'}\overrightarrow{Z'}=Z'$.  This completes the case.
        
        \item Discrete internal transition.  In this case
        $\chi_{n-1}=\chi_n=\varepsilon$, $t=q_{n-1}\xrightarrow{g,\nop,R}q_n$,
        $v_{n-1}\models g$ and $v_n=[R]v_{n-1}$.  Then applying induction hypothesis,
        there exists $(q_{n-1},Z')\in S_{(q,Z)}$ such that $v_{n-1}\preceq_{q_{n-1}}Z'$.
        Now let $Z''=\overrightarrow{R(g\cap Z')}$.  From the definition of the simulation
        relation we get $v_n\preceq_{q_n}Z''$.  Then, applying the Internal rule, 
        there exists $(q_n,Z''')\in S_{(q,Z)}$ such that $Z''\preceq_{q_n}Z'''$, with
        possibly $Z'''=Z''$.  Hence, $v_n\preceq_{q_n}Z''\preceq_{q_n}Z'''$, which
        completes the case.

        \begin{figure}[tbp]
          \scalebox{0.78}
          {
          $
          \begin{array}{ccccccccc}
            (q,v,\varepsilon) & \xrightarrow{*} & (q_i,v_i,\varepsilon) & \xrightarrow{t} & 
            (q_{i+1},v_{i+1},a) & \xrightarrow{*} & (q_{n-1},v_{n-1},a) & \xrightarrow{t_1} & (q_n,v_n,\epsilon)  
            \\[1ex]
            v\preceq_{q}Z &  & v_i\preceq_{q_i}Z_i &  & v_{i+1}\preceq_{q_{i+1}}Z_{i+1} &  & 
            v_{n-1}\preceq_{q_{n-1}}Z_{n-1} &  & v_n\preceq_{q_n}Z_n 
            \\[1ex]
            (q,Z)\in \cS &  & (q_i,Z_i)\in S_{(q,Z)} &  & Z_{i+1}=\overrightarrow{R(g\cap Z_i)} &  & 
            (q_{n-1},Z_{n-1})\in S_{(q_{i+1},Z'_{i+1})} &  & Z_n=\overrightarrow{R_1(g_1\cap Z_{n-1})}
            \\[1ex]
            &  &  &  & Z_{i+1}\sim_{q_{i+1}}Z'_{i+1} &  &  &  & Z_n\preceq_{q_n}Z'_n
            \\[1ex]
            &  & &  & (q_{i+1},Z'_{i+1})\in \cS &  &  &  & (q_n,Z'_n)\in S_{(q,Z)} 
          \end{array}
          $
          }
          \caption{Construction for the completeness-push-pop last sub-case.}
          \label{fig:complete-pop}
        \end{figure}
        
        \item Pop transition.  Then there exists $1\leq i<n-1$ such that the
        run has the form: $(q_1,v_1,\varepsilon)\rightarrow\ldots \rightarrow
        (q_i,v_i,\chi_i=\varepsilon)\xrightarrow{\push_a}
        (q_{i+1},v_{i+1},\chi_{i+1}=a)\rightarrow \ldots \rightarrow
        (q_{n-1},v_{n-1},\chi_{n-1}=a)\xrightarrow{\pop_a} (q_n,v_n,\chi_n=\varepsilon)$,
        where the push and pop are matching transitions, i.e., $|\chi_j|\geq1$ for all
        $i<j<n-1$ (see Figure~\ref{fig:complete-pop}).  Then by induction hypothesis at
        $i$, we have
        \begin{align} 
          v_i\preceq_{q_i}Z_i \text{ for some } (q_i,Z_i)\in S_{(q,Z)} \,.
        \end{align}
        From the push transition we have
        \begin{align}
          \exists t=q_i\xrightarrow{g, \push_a, R} q_{i+1} \in\Delta 
          \text{ with } v_i\models g \text{ and } v_{i+1}=[R]v_i \,.
        \end{align}
        Let $Z_{i+1}=\overrightarrow{R(g\cap Z_i)}$.  By definition of the simulation
        relation, we deduce from $v_i\preceq_{q_i}Z_i$ that
        $v_{i+1}\preceq_{q_{i+1}}Z_{i+1}$.  We can apply the Push rule to obtain
        \begin{align}
          (q_{i+1},Z'_{i+1})\in\cS^* \text{ for some } 
          Z'_{i+1}\sim_{q_{i+1}}Z_{i+1}
        \end{align}
        possibly with $Z'_{i+1}=Z_{i+1}$ as a special case.
        
        Further the segment of run $(q_{i+1},v_{i+1},a)\rightarrow \ldots
        (q_{n-1},v_{n-1},a)$ in the PDTA never pops the symbol $a$ (by choice, since
        otherwise the push and pop would not be matching).  Hence we will also have the
        same sequence of transitions forming a run
        $(q_{i+1},v_{i+1},\varepsilon)\rightarrow \ldots (q_{n-1},v_{n-1},\varepsilon)$.
        Using $v_{i+1}\preceq_{q_{i+1}}Z_{i+1}\sim_{q_{i+1}}Z'_{i+1}$, we deduce that
        $v_{i+1}\preceq_{q_{i+1}}Z'_{i+1}$.
        By induction hypothesis,
        \begin{align}
          v_{n-1}\preceq_{q_{n-1}}Z_{n-1} \text{ for some }
          (q_{n-1},Z_{n-1})\in S_{(q_{i+1},Z'_{i+1})}  \,.
        \end{align}
        Finally, we have the pop transition
        \begin{align}
          t_1=q_{n-1}\xrightarrow{g_1,\pop_a,R_1} q_n \in\Delta 
          \text{ with } v_{n-1}\models g_1 \text{ and } v_n=[R_1]v_{n-1} \,.
        \end{align}
        We let $Z_n=\overrightarrow{R_1(g_1\cap Z_{n-1})}$.  From
        $v_{n-1}\preceq_{q_{n-1}}Z_{n-1}$ and the definition of the simulation relation we
        obtain $v_n\preceq_{q_n}Z_n$.  Then, combining all the above equations (1--5), and
        applying the Pop-rule we obtain some $(q_n,Z'_n)\in S_{(q,Z)}$ with
        $Z_n\preceq_{q_n}Z'_n$ (possibly $Z'_n=Z_n$).  Finally we get
        $v_n\preceq_{q_n}Z_n\preceq_{q_n}Z'_n$.  This completes the proof.
      \end{itemize}
    \smallskip\noindent\textbf{3.}
    We will show that the following property is invariant by rule applications:
    \begin{align}
      \forall (q,Z)\in \cS,~& \forall (q',Z')\in S_{(q,Z)}, \forall v'\in Z', 
      \text{ there is a run } 
      \notag\\
      &(q,v,\varepsilon)\xrightarrow{*}(q',v'',\varepsilon)
      \text{ with } v\in Z \text{ and } v'\preceq_{q'}v''
      \tag{Inv}
      \label{eq:invariant}
    \end{align}
    The invariant holds initially, i.e., after application of the start rule.
    Indeed, in this case we have $\cS=\{(q_0,Z_0)\}$ and $S_{(q_0,Z_0)}=\{(q_0,Z_0)\}$. 
    Hence $(q',Z')=(q,Z)=(q_0,Z_0)$ and for all $v\in Z_0$ we can choose 
    the empty run $(q_0,v,\varepsilon)\xrightarrow{0}(q_0,v,\varepsilon)$.
    
  We show below that \eqref{eq:invariant} is preserved by application of an
  internal/push/pop rule.  Therefore, the invariant still holds when reaching
  the fixed point, which proves the soundness.  
  Let us write $\cS^{-}$ and $S^{-}_{(q,Z)}$ for the sets before 
  the application of the rule and $\cS$ and $S_{(q,Z)}$ for the sets after 
  the application of the rule.
  \par
  \smallskip\noindent\textbf{Internal rule.}
    Let $(q,Z)\in\cS=\cS^{-}$, $(q',Z')\in S_{(q,Z)}$ and $v'\in Z'$. If 
    $(q',Z')\in S^{-}_{(q,Z)}$ then we get the result since 
    \eqref{eq:invariant} holds before applying the internal rule. Otherwise, 
    there is some $(q_1,Z_1)\in S^{-}_{(q,Z)}$ and a transition
    $t=q_1\xrightarrow{g,\nop,R}q'$ with $Z'=\overrightarrow{R(g\cap Z_1)}$.  
    
    By definition, there exists $v_1\in Z_1$ such that $v_{1}\models g$ and
    $v'=[R]v_{1}+\delta$ for some $\delta\geq0$.  Hence we have a run
    $(q_{1},v_{1},\varepsilon)\xrightarrow{t}\xrightarrow{\delta}(q',v',\varepsilon)$.
    Since the invariant holds before the internal rule, there is a run
    $(q,v,\varepsilon)\xrightarrow{*}(q_{1},v'_{1},\varepsilon)$ with $v\in Z$ and
    $v_{1}\preceq_{q_1}v'_{1}$.  Now since $\preceq$ is a simulation we obtain that
    $(q_{1},v'_{1},\varepsilon)\xrightarrow{t}\xrightarrow{\delta}(q',v'',\varepsilon)$
    with $v'\preceq_{q'}v''$ and we are done.

  \smallskip\noindent\textbf{Push rule.}
    Let $(q,Z)\in\cS$, $(q',Z')\in S_{(q,Z)}$ and $v'\in Z'$.  If $(q,Z)\in\cS^{-}$ then
    we get the result since \eqref{eq:invariant} holds before applying the Push
    rule.  Otherwise, we must have $(q',Z')=(q,Z)$ and we can choose the empty run
    $(q,v',\varepsilon)\xrightarrow{0}(q,v',\varepsilon)$.

  \smallskip\noindent\textbf{Pop rule.}
    Let $(q,Z)\in\cS=\cS^{-}$, $(q_2,Z_2)\in S_{(q,Z)}$ and $v'\in Z_2$. Again, if 
    $(q_2,Z_2)\in S^{-}_{(q,Z)}$ then we get the result since 
    \eqref{eq:invariant} holds before applying the Pop rule. Otherwise, by
    definition of the pop rule we have:
    \begin{enumerate}[nosep]
      \item some $(q',Z')\in S_{(q,Z)}$, 
      \item some push transition $t=q'\xrightarrow{g,\push_a,R}q''$,
      \item some $(q'',Z_1)\in \cS$ with $Z_1\sim_{q''}Z''=\overrightarrow{R(g\cap Z')}$,
      \item some $(q'_1,Z'_1)\in S_{(q'',Z_1)}$,
      \item some pop transition $t_1=q'_1\xrightarrow{g_1,\pop_a,R_1}q_2$,
    \end{enumerate}
    with $Z_2=\overrightarrow{R_1(g_1\cap Z'_1)}$. The construction below is 
    illustrated in Figure~\ref{fig:sound}.
    
    \begin{figure}[tbp]
      \centering
      $
      \begin{array}{ccccccccc}
        &  &  &  &  &  & (q'_1,v_4,a) & \xrightarrow{t_1}\xrightarrow{\delta'} & (q_2,v',\epsilon)  
        \\
        &  &  &  &  &  & \preceqv &  &   
        \\
        &  &  &  & (q'',v_3,a) & \xrightarrow{*} & (q'_1,v'_{4},a) &  & 
        \\
        &  &  &  & \preceqv &  &  &  & \preceqv
        \\
        &  & (q',v_{2},\varepsilon) & \xrightarrow{t}\xrightarrow{\delta} & (q'',v'_{3},a) &  & \preceqv &  &  
        \\
        &  & \preceqv &  & \preceqv  &  &  &  &   
        \\
        (q,v,\varepsilon) & \xrightarrow{*} & (q',v'_{2},\varepsilon) & 
        \xrightarrow{t}\xrightarrow{\delta} & (q'',v''_{3},a) & 
        \xrightarrow{*} & (q'_1,v''_{4},a) & 
        \xrightarrow{t_1}\xrightarrow{\delta'} & (q_2,v'',\epsilon) 
      \end{array}
      $
      \caption{Construction for the soundness.}
      \label{fig:sound}
    \end{figure}

    Since $v'\in Z_2$, we get some $v_4\in Z'_1$ such that $v_4\models g_1$ and 
    $v'=[R_1]v_4+\delta'$ for some $\delta'\geq0$. Hence we have a run
    $(q'_1,v_4,a)\xrightarrow{t_1}\xrightarrow{\delta'}(q_2,v',\varepsilon)$.
    
    Now, applying the invariant to $(q'',Z_1)\in\cS$,
    $(q'_1,Z'_1)\in S_{(q'',Z_1)}$ and $v_4\in Z'_1$,  we get a run 
    $(q'',v_3,\varepsilon)\xrightarrow{*}(q'_1,v'_4,\varepsilon)$ with 
    $v_3\in Z_1$ and $v_{4}\preceq_{q'_1}v'_{4}$. Hence, we also have a run
    $(q'',v_3,a)\xrightarrow{*}(q'',v'_{4},a)$.
    
    Let $v'_{3}\in Z''=\overrightarrow{R(g\cap Z')}\sim_{q''}Z_1$ with
    $v_3\preceq_{q''}v'_{3}$.  we get some $v_2\in Z'$ such that $v_{2}\models g$
    and $v'_{3}=[R]v_{2}+\delta$ for some $\delta\geq0$.  Hence we have a run
    $(q',v_{2},\varepsilon)\xrightarrow{t}\xrightarrow{\delta}(q'',v'_{3},a)$.
   
    Finally, we apply the invariant to $(q,Z)\in\cS$,
    $(q',Z')\in S_{(q,Z)}$ and  $v_{2}\in Z'$, we get a run 
    $(q,v,\varepsilon)\xrightarrow{*}(q',v'_{2},\varepsilon)$ with 
    $v\in Z$ and $v_{2}\preceq_{q'}v'_{2}$.

    By repeatedly applying the property of simulation $\preceq$,
    we may extend the run from $(q',v'_{2},\varepsilon)$ with
    $
    (q',v'_{2},\varepsilon) 
    \xrightarrow{t}\xrightarrow{\delta} (q'',v''_{3},a) 
    \xrightarrow{*} (q'_1,v''_{4},a) 
    \xrightarrow{t_1}\xrightarrow{\delta'} (q_2,v'',\varepsilon)
    $
    where $v_3\preceq_{q''}v'_{3}\preceq_{q''}v''_{3}$ and
    $v_{4}\preceq_{q'_1}v'_{4}\preceq_{q'_1}v''_{4}$.
    Finally $v'\preceq_{q_2}v''$.
    Therefore, the invariant holds after the pop rule.
    
\end{proof}

\section{Algorithm for PDTA Reachability via Zones}\label{sec:algo}

In this section, we describe Algorithm~\ref{algo:main} implementing the fixed point
computation defined by the inductive rules in Table~\ref{tbl:rules-tpda}.  We describe
the structure of the algorithm and its main data-structures.

Notice first that the sets $\cS$ and $S_{(q,Z)}$ for $(q,Z)\in\cS$ can be alternatively 
represented as a single set of pairs of nodes:
$$
\mathcal{S} = \{ [(q,Z),(q',Z')] \mid (q,Z)\in\cS \text{ and } (q',Z')\in S_{(q,Z)} \} \,.
$$
We can recover $\cS$ as the first projection of $\mathcal{S}$ and $S_{(q,Z)}$ as the 
second projection of $\mathcal{S}$ filtered by the first component being $(q,Z)$. We use 
both notations below depending on which is more convenient.
The start rule initializes $\mathcal{S}$ to $\{[(q_0,Z_0),(q_0,Z_0)]\}$.

Let us consider first the rule for internal transitions.  For each already discovered pair
of nodes $[(q,Z),(q',Z')]\in\mathcal{S}$ (or $(q',Z')\in S_{(q,Z)}$ with $(q,Z)\in\cS$),
we have to consider each possible internal transition $q'\xrightarrow{g,\nop,R}q''$ and
check whether the node $(q'',Z'')$ with $Z''=\overrightarrow{R(g\cap Z')}$ should be added
to $S_{(q,Z)}$ or is subsumed by an existing node.  This is like a graph traversal.  The
set $\mathcal{S}$ stores the already discovered pairs of nodes, and we will use a $\Todo$
(unordered) list to store the newly discovered nodes from which outgoing transitions
should be considered.  The $\Todo$ list should also consist of pairs $[(q,Z),(q',Z')]$ so
that when a new node $(q'',Z'')$ is discovered by an internal transition from $(q',Z')$ we
know to which set $S_{(q,Z)}$ it should be added.

As we can see from Theorem~\ref{thm:tpdarules}-soundness, given $(q,Z)\in\cS$, the set
$S_{(q,Z)}$ should consist of nodes reachable from $(q,Z)$ via a well-nested run.  Hence,
when dealing with a pair $[(q,Z),(q',Z')]\in\mathcal{S}$ and we see a push transition
$q'\xrightarrow{g,\push_a,R}q''$ with $Z''=\overrightarrow{R(g\cap Z')}$, we should not
try to add the pair $(q'',Z'')$ to $S_{(q,Z)}$ since the corresponding run would not be
well-nested.  Instead, we should search for a matching pop transition which could be taken
after a well-nested run starting from $(q'',Z'')$.  This is why the push rule adds the new
root $(q'',Z'')$ to $\cS$ (unless it is equivalent to an existing root).  The pair of
nodes $[(q'',Z''),(q'',Z'')]$ is newly discovered and added to the $\Todo$ list for
further exploration.  

The push transition may be matched with several pop transitions
(which could be already discovered or yet to be discovered by the algorithm).  To avoid
revisiting the push transition many times, it will be stored by the algorithm in an
additional set $\mathcal{S}_\push$.  More precisely, we will store in $\mathcal{S}_\push$
the tuple $[(q,Z),a,(q'',Z'')]$ meaning that the root node $(q'',Z'')$ may be reached from
the root node $(q,Z)$ via a well-nested run reaching some $(q',Z')$ followed by a 
transition pushing $a$ onto the stack.

Finally, assume that, when dealing with a pair $[(q_1,Z_1),(q'_1,Z'_1)]\in\mathcal{S}$, we
see a pop transition $q'_1\xrightarrow{g_1,\pop_a,R_1}q_2$ with
$Z_2=\overrightarrow{R_1(g_1\cap Z'_1)}$.  We will check whether it can be matched with an
already visited push transition, stored in the set $\mathcal{S}_\push$ as a pair
$[(q,Z),a,(q'',Z'')]$ with $(q'',Z'')=(q_1,Z_1)$.  If this is the case, the pop rule may
be applied and the node $(q_2,Z_2)$ added to $S_{(q,Z)}$ (unless it is subsumed by an
existing node).  The newly discovered pair of nodes $[(q,Z),(q_2,Z_2)]$ is also added to
the $\Todo$ list for further exploration. Once again, the pop transition may also be 
matched with push transitions that will be discovered later by the algorithm. To avoid 
revisiting the pop transition many times, we store the tuple $[(q_1,Z_1),a,(q_2,Z_2)]$ in 
a new set $\mathcal{S}_\pop$.

\medskip\noindent\textbf{Data structures.}\label{sec:data-structures}
We use a data structure $\TLM$ to store the triple of sets 
$(\mathcal{S},\mathcal{S}_\push,\mathcal{S}_\pop)$ and which is accessed with the 
following methods.
\begin{itemize}[nosep]
  \item  $\TLM.\create()$ creates the data structure with the three sets empty.

  \item  $\TLM.\add(q,Z,q',Z')$ adds $[(q,Z),(q',Z')]$ to $\mathcal{S}$.

  \item  $\TLM.\addpush(q,Z,a,q',Z')$ adds $[(q,Z),a,(q',Z')]$ to $\mathcal{S}_\push$.

  \item  $\TLM.\addpop(q,Z,a,q',Z')$ adds $[(q,Z),a,(q',Z')]$ to $\mathcal{S}_\pop$.

  \item  $\TLM.\isnewroot(q,Z)$ returns $[\false,Z']$ if there exists some 
  $(q,Z')\in\cS$ with $Z'\sim_q Z$, and returns $[\true,Z]$ otherwise.

  \item  $\TLM.\isnewnode(q,Z,q',Z')$ returns $\false$ if
  $\exists[(q,Z),(q',Z'')]\in\mathcal{S}$ with $Z'\preceq_{q'}Z''$, and returns $\true$ otherwise.

  \item  $\TLM.\isnewpop(q,Z,a,q',Z')$ returns $\false$ if 
  $\exists[(q,Z),a,(q',Z'')]\in\mathcal{S}_\pop$ with $Z'\preceq_{q'}Z''$, $\true$ otherwise.

  \item $\TLM.\isnewpush(q,Z,a,q',Z')$ returns $\false$ if
  $[(q,Z),a,(q',Z')]\in\mathcal{S}_\push$, and returns $\true$ otherwise.

  \item $\TLM.\iterpop(q,Z,a)$ returns the list of $(q',Z')$ with 
  $[(q,Z),a,(q',Z')]\in\mathcal{S}_\pop$.

  \item $\TLM.\iterpush(a,q',Z')$ returns the list of $(q,Z)$, s.t.
  \mbox{$[(q,Z),a,(q',Z')]\in\mathcal{S}_\push$}.
\end{itemize}

Concretely, the data structure should store sets of nodes $(q,Z)$ and be able to search 
or iterate through such sets. 
In order to make the algorithm slightly faster, we will segregate our sets of nodes, with
the name of the state.  We will use a hashmap in order to accomplish this task. See 
Figure~\ref{fig:TLM} where the concrete data structure is depicted.

\begin{algorithm}[tbp]
\footnotesize
\caption{PDTA Reachability Using Zones.}
\label{algo:main}
\begin{algorithmic}[1]
\Procedure{PDTAReach}{}
\State $\TLM.\create()$\label{line:startrule}
\State $\TLM.\add(q_0,Z_0,q_0,Z_0)$\label{line:startrule}\Comment{Start Rule}
\State $\Todo=\{[(q_0,Z_0),(q_0,Z_0)]\}$
\While{$\Todo \neq \emptyset $}
\State $[(q,Z),(q',Z')] = \Todo.\mathsf{get}()$\Comment{$(q,Z)\in\cS \wedge (q',Z')\in S_{(q,Z)}$}
\For{$t=q'\xrightarrow{g,\op,R}q''$ and $Z''=\overrightarrow{R(g\cap Z')}\neq\emptyset$}\label{line:forloop}
\If{$\op = \nop \land \TLM.\isnewnode(q,Z,q'',Z'')$}\label{line:nopisnewnode}
\State $\TLM.\add(q,Z,q'',Z'')$\label{line:internalrule}\Comment{Internal Rule}
\State $\Todo.\add([(q,Z),(q'',Z'')])$
\ElsIf{$\op = \push_a$}
\State $[isNew,Z_1] = \TLM.\isnewroot(q'',Z'')$\label{line:getoldroot}
\If{$isNew == \true$}\label{line:isNewRoot}
\State $\TLM.\add(q'',Z'',q'',Z'')$\label{line:pushrule}\Comment{Push Rule}
\State $\Todo.\add([(q'',Z''),(q'',Z'')])$
\EndIf
\If{$\TLM.\isnewpush(q,Z,a,q'',Z_1)$}\label{line:matchpopstart}
\State $\TLM.\addpush(q,Z,a,q'',Z_1)$\label{line:appPush}
\For{$(q_2,Z_2)$ in $\TLM.\iterpop(q'',Z_1,a)$}\label{line:formatchpopstart}
\If{$\TLM.\isnewnode(q,Z,q_2,Z_2)$}\label{line:ifmatchpop}
\State $\TLM.\add(q,Z,q_2,Z_2)$\label{line:poprule1}\Comment{Pop Rule}
\State $\Todo.\add([(q,Z),(q_2,Z_2)])$
\EndIf
\EndFor
\EndIf \label{line:matchpopend}
\ElsIf{$\op = \pop_a$}
\If{$\TLM.\isnewpop(q,Z,a,q'',Z'')$}\label{line:matchpushstart}
\State $\TLM.\addpop(q,Z,a,q'',Z'')$\label{line:appendpop}
\For{$(q_3,Z_3)$ in $\TLM.\iterpush(a,q,Z)$}\label{line:formatchpushstart}
\If{$\TLM.\isnewnode(q_3,Z_3,q'', Z'')$}\label{line:ifmatchpush}
\State $\TLM.\add(q_3,Z_3,q'',Z'')$\label{line:poprule2}\Comment{Pop Rule with $q=q_3,Z=Z_3$}
\State $\Todo.\add([(q_3,Z_3),(q'',Z'')])$\Comment{$q_2=q'',Z_2=Z''$}
\EndIf
\EndFor
\EndIf \label{line:matchpushend}
\EndIf
\EndFor
\EndWhile
\EndProcedure
\end{algorithmic}
\end{algorithm}

We will use a first level hashmap to store the set of roots $\cS$.  In order to implement $\TLM.\isnewnode(q,Z,q',Z')$, we first search for $(q,Z)$ in the first level map, then a pointer $\TLM[q][Z][0]$ will lead to a second level hashmap for the set of nodes $S_{(q,Z)}$ and we search for $(q',Z')$ in this second level map. 
See Figure~\ref{fig:TLM}(b).

To implement
$\TLM.\isnewpop(q,Z,a,q',Z')$ and $\TLM.\iterpop(q,Z,a)$, we
first search the root node $(q,Z)$ in the first level map, then a pointer $\TLM[q][Z][2]$
will lead to a second level hashmap storing the set of triples $(a,q',Z')$ such that
$[(q,Z),a,(q',Z')]\in\mathcal{S}_\pop$.  To speed up the access, this second level pop map
is segregated first on the key $a$, then on the key $q'$ to get the list of corresponding
zones $Z'$. See Figure~\ref{fig:TLM}(c,d).

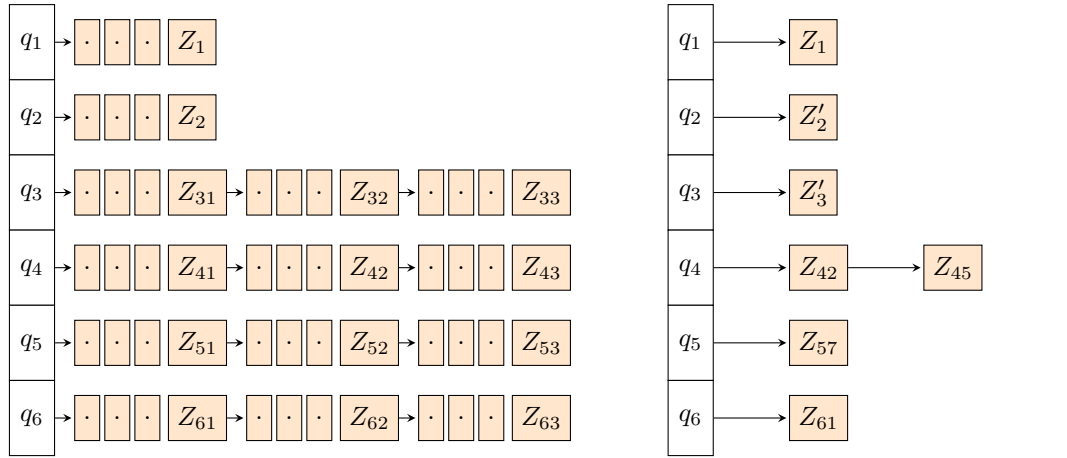
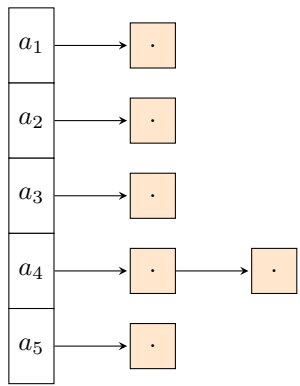
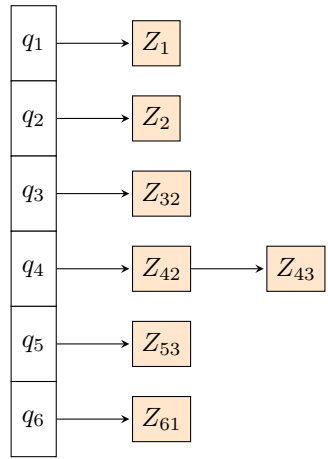
\begin{figure}[tbp]
    \centering
    \begin{subfigure}[t]{0.45\textwidth}
    \centering
    \begin{tikzpicture}
\foreach \index/\list in {1/{$Z_1$}, 2/{$Z_2$}, 3/{$Z_{31}$,$Z_{32}$,$Z_{33}$}, 4/{$Z_{41}$,$Z_{42}$,$Z_{43}$}, 5/{$Z_{51}$,$Z_{52}$,$Z_{53}$}, 6/{$Z_{61}$,$Z_{62}$,$Z_{63}$}} {
  \node[array element] (aux) at (0,-\index) {$q_{\index}$};
  \LinkedList{\list}
}
\end{tikzpicture}
    \caption{First level map constructed using equivalence $\sim_q$ for controlling size. 
Keys will be state names, values will be lists of quadruplets, each of which has four
pointers to second level maps, second level pushes maps, second level pops maps, and
zones.}
    \label{fig:label:a}
    \end{subfigure}
    \hfill
    \begin{subfigure}[t]{0.45\textwidth}
    \centering
    \begin{tikzpicture}
    \foreach \index/\list in {1/{$Z_1$}, 2/{$Z_2'$}, 3/{$Z_{3}'$}, 4/{$Z_{42}$,$Z_{45}$}, 5/{$Z_{57}$}, 6/{$Z_{61}$}} {
      \node[array element] (aux1) at (0,-\index) {$q_{\index}$};
      \LinkedListTwo{\list}
    }
    \end{tikzpicture}
    \caption{Second level map corresponding to $S_{(q_1,Z_1)}$. Each first level map node will have its own second level map.}
    \label{fig:label:b}
    \end{subfigure}
    \begin{subfigure}[t]{0.45\textwidth}
    \centering
    \begin{tikzpicture}
\foreach \index/\list in {1/{.},2/{.}, 3/{.}, 4/{.,.}, 5/{.}} {
  \node[array element] (aux1) at +
  (0,-\index) {$a_{\index}$};
  \LinkedListTwo{\list}
}
\end{tikzpicture}
\caption{Pushes/Pops map corresponding to root node $(q_1,Z_1)$.
Each pointer points to a different map where $(q,Z)$ are stored.}
\label{fig:label:a}
    \end{subfigure}
    \hfill
    \begin{subfigure}[t]{0.45\textwidth}
    \centering
    \begin{tikzpicture}
    \foreach \index/\list in {1/{$Z_1$}, 2/{$Z_2$}, 3/{$Z_{32}$}, 4/{$Z_{42}$,$Z_{43}$}, 5/{$Z_{53}$}, 6/{$Z_{61}$}} {
      \node[array element] (aux1) at +
      (0,-\index) {$q_{\index}$};
      \LinkedListTwo{\list}
    }
    \end{tikzpicture}
    \caption{For pushes/pops map, this is a map corresponding to root node $(q_1,Z_1)$,
and symbol $a_2$ (say).  The $(q,Z)$ stored here is constructed using equivalence
(pushes map), or using simulation (pops map).}
\label{fig:label:b}
    \end{subfigure}
    \caption{Two level map implementing the data structure $\TLM$ storing the sets 
$\mathcal{S}$, $\mathcal{S}_\push$, $\mathcal{S}_\pop$.}
    \label{fig:TLM}
\end{figure}

Finally, we also store the set $\mathcal{S}_\push$ to implement
$\TLM.\isnewpush(q,Z,a,q',Z')$ and $\TLM.\iterpush(a,q',Z')$.  Notice that
$\mathcal{S}_\push$ consists of triples $[(q,Z),a,(q',Z')]$ where both $(q,Z)$ and
$(q',Z')$ are root nodes from $\cS$.  Notice also that for the iteration we fix the second
node $(q',Z')$.  To get an efficient implementation, we first search the root node
$(q',Z')$ in the first level map, then a pointer $\TLM[q'][Z'][1]$ will lead to a second
level hashmap storing the set of triples $(a,q,Z)$ such that
$[(q,Z),a,(q',Z')]\in\mathcal{S}_\push$.  To speed up the access, this second level push
map is segregated first on the key $a$, then on the key $q$ to get the list of
corresponding zones $Z$.  See Figure~\ref{fig:TLM}(c,d).

\medskip
We move next to proving the main lemmas that will help us show correctness of 
Algorithm~\ref{algo:main}. We start with termination.

\begin{lemma}[Termination]\label{lem:term_lemma}
  Algorithm~\ref{algo:main} will always terminate.
\end{lemma}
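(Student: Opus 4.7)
The plan is to show that only finitely many pairs are ever enqueued into $\Todo$ and that each iteration of the outer while loop performs only finitely much work. Since one element is removed from $\Todo$ per iteration, this implies termination.

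For the first point, I would observe that $\Todo$ receives an insertion only at five program points: the initialization with $[(q_0,Z_0),(q_0,Z_0)]$, the internal rule branch (guarded by $\TLM.\isnewnode$), the push rule branch (guarded by $\TLM.\isnewroot$), and the two pop-rule branches (each guarded by $\TLM.\isnewnode$). In every such case, the insertion is accompanied by an $\TLM.\add$ call that grows $\mathcal{S}$ or produces a genuinely new root of $\cS$, and the guard prevents re-adding a pair that is already subsumed. Consequently, the total number of $\Todo$ insertions is bounded by $|\mathcal{S}|$ at termination, which by construction corresponds to a fixed point of the inductive rules of Table~\ref{tbl:rules-tpda}. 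By Theorem~\ref{thm:tpdarules}(1), any such fixed point satisfies that $\cS^{*}$ is finite and each $S^{*}_{(q,Z)}$ is finite, so $|\mathcal{S}|$ is finite and only finitely many pairs are ever enqueued.

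For the second point, each iteration opens the for-loop of line~\ref{line:forloop}, which ranges over the finite set of outgoing transitions of $q'$. Inside the push branch, the inner iteration at line~\ref{line:formatchpopstart} ranges over $\TLM.\iterpop(q'',Z_1,a)$, whose size is bounded by $|\mathcal{S}_\pop|$; this set is finite because its first component lies in the finite $\cS^{*}$, $a$ ranges over the finite alphabet $\Gamma$, and for each fixed triple $(q,Z,a,q')$ the pruning by $\TLM.\isnewpop$ via the finite simulation $\preceq$ rules out infinite antichains of zones. Symmetrically the iteration at line~\ref{line:formatchpushstart} is bounded by $|\mathcal{S}_\push|$, which is finite since both of its node components are roots and we have $|\mathcal{S}_\push| \le |\cS^{*}|^{2}\cdot|\Gamma|$ (the $\isnewpush$ guard rejects exact repetitions). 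Hence every iteration performs a bounded number of steps.

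The main obstacle in the argument is that finiteness of $\mathcal{S}_\push$ and $\mathcal{S}_\pop$ is not directly stated by Theorem~\ref{thm:tpdarules} and must be established separately: one must notice that the $\isnewroot$ test using the equivalence $\sim$ (with $\preceq$ strongly finite) is what makes $\cS^{*}$ finite, which then feeds into the bound on $\mathcal{S}_\push$, while $\mathcal{S}_\pop$ requires only the ordinary finiteness of $\preceq$. Combining these ingredients, the while loop can execute only finitely many iterations, and each iteration halts, so Algorithm~\ref{algo:main} terminates.
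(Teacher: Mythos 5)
Your proposal is correct and takes essentially the same approach as the paper's (much briefer) proof: every insertion into $\Todo$ is preceded by a guarded $\TLM.\add$, and the $\isnewroot$/$\isnewnode$ pruning checks together with strong finiteness of $\preceq$ (bounding the roots) and ordinary finiteness of $\preceq$ (bounding each $S_{(q,Z)}$) keep $\TLM$ finite, so the while loop runs only finitely often. The one wrinkle is your phrase ``bounded by $|\mathcal{S}|$ at termination,'' which presupposes what is being proved; the repair is immediate---the finiteness argument of Theorem~\ref{thm:tpdarules}(1) applies to every mid-computation, so $\mathcal{S}$ can never grow past a finite bound at any point of the run---and your extra bounds on $\mathcal{S}_\push$ and $\mathcal{S}_\pop$ are correct but not strictly needed, since at any moment the inner for-loops range over sets that are finite anyway.
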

\begin{proof}
  If we look at our algorithm, we can clearly see that before adding a pair of nodes to
  the $\Todo$ list, we add the same pair to $\mathcal{S}$ with $\TLM.\add$, and before
  that, we always check whether the pair is already in our $\TLM$ or not ($\isnewnode$ or
  $\isnewroot$).  Since the size of the $\TLM$ is always bounded because we check either
  the first level map or the second level map before adding, the outer while loop will be
  called only a finite number of times, which shows termination of the algorithm. \end{proof}

Next, recall that $\TLM$ encodes a triple of sets 
$(\mathcal{S},\mathcal{S}_\push,\mathcal{S}_\pop)$ defined by:
\begin{align*}
  \mathcal{S} & =\{[(q,Z),(q',Z')] \mid (q',Z')\in\TLM[q][Z][0] \} \\
  \mathcal{S}_\push & =\{[(q,Z),a,(q',Z')] \mid (a,q,Z)\in\TLM[q'][Z'][1] \} \\
  \mathcal{S}_\pop & =\{[(q,Z),a,(q',Z')] \mid (a,q',Z')\in\TLM[q][Z][2] \} 
\end{align*}
Recall also the correspondence explained at the beginning of Section~\ref{sec:algo}
between a set $\mathcal{S}$ of pairs of nodes, and the set of roots $\cS$ together with 
the sets of nodes $S_{(q,Z)}$ for $(q,Z)\in\cS$. Then we have the following properties.

\begin{lemma}\label{lem:todotlm}
  If $[(q,Z),(q',Z')] \in \Todo$, then $(q',Z')\in \TLM[q][Z][0]$.
\end{lemma}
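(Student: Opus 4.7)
The plan is to establish the claim as a loop invariant of Algorithm~\ref{algo:main}, maintained both by the initialization and by one iteration of the \textbf{while} loop. The precise invariant I would maintain is: for every $[(q,Z),(q',Z')] \in \Todo$, the pair $(q',Z')$ lies in $\TLM[q][Z][0]$. Implicitly I also rely on the fact that once an entry is inserted into $\TLM[q][Z][0]$ by $\TLM.\add$ it is never removed, so that older members of $\Todo$ keep their witnesses as the algorithm progresses.

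For the base case I would simply observe that immediately after the Start Rule block, $\Todo = \{[(q_0,Z_0),(q_0,Z_0)]\}$, and the preceding call $\TLM.\add(q_0,Z_0,q_0,Z_0)$ places $(q_0,Z_0)$ into $\TLM[q_0][Z_0][0]$. For the inductive step, the only operations that could endanger the invariant are the insertions into $\Todo$ performed in the body of the loop. Inspection of the code shows that there are exactly four such insertions, each immediately preceded by a matching $\TLM.\add$:
\begin{itemize}[nosep]
\item \emph{Internal rule}: $\TLM.\add(q,Z,q'',Z'')$ then $\Todo.\add([(q,Z),(q'',Z'')])$;
\item \emph{Push rule}: $\TLM.\add(q'',Z'',q'',Z'')$ then $\Todo.\add([(q'',Z''),(q'',Z'')])$, both inside the $isNew=\true$ branch;
\item \emph{Pop rule} (match-pop branch): $\TLM.\add(q,Z,q_2,Z_2)$ then $\Todo.\add([(q,Z),(q_2,Z_2)])$;
\item \emph{Pop rule} (match-push branch): $\TLM.\add(q_3,Z_3,q'',Z'')$ then $\Todo.\add([(q_3,Z_3),(q'',Z'')])$.
\end{itemize}
By the specification of $\TLM.\add$, in each case the freshly inserted pair $[(q,Z),(q',Z')]$ satisfies $(q',Z') \in \TLM[q][Z][0]$ at the moment of insertion. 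Pairs already in $\Todo$ are unaffected, since $\TLM.\add$ only enlarges $\TLM[q][Z][0]$, and the $\Todo.\mathsf{get}$ step only removes a pair, trivially preserving the invariant.

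This is essentially a bookkeeping argument, so I do not expect any substantial obstacle. The only subtlety worth double-checking is that the Push branch inserts into $\Todo$ exclusively inside the $isNew=\true$ guard, which is precisely where the corresponding $\TLM.\add$ is executed; outside that guard nothing is added to $\Todo$ from the push branch, and so there is no mismatch between $\Todo$ and $\TLM[q][Z][0]$.
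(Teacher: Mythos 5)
Your proof is correct and takes essentially the same route as the paper's: the paper also argues that every $\Todo.\add$ is immediately preceded by a matching $\TLM.\add$ and that $\TLM$ only grows, just more tersely and without spelling out the invariant or enumerating the four insertion sites. Your version is simply a more explicit write-up of the same bookkeeping argument.
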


\begin{proof}
  $\TLM[q][Z][0]$ is changed only when $\add$ method is called.  As we can see in
  the algorithm, every addition to $\Todo$ with
  $\Todo.\add([(q,Z),(q',Z')])$
  follows a call 
  $\TLM.\add(q,Z,q',Z')$
  resulting in $(q',Z')\in \TLM[q][Z][0]$.
  And since $\TLM$ only increases, we can say that the lemma is true.  The
  converse is not true, as $\Todo$ can increase and decrease.
  
\end{proof}

\begin{lemma}[Pops]\label{lem:pop_constraint}
  If $(a,q'',Z'')\in\TLM[q][Z][2]$, then there exist $(q',Z')\in\TLM[q][Z][0]$
  and some transition $t=q'\xrightarrow{g,\pop_a,R}q''$ with
  $Z''=\overrightarrow{R(g\cap Z')}\neq\emptyset$.
\end{lemma}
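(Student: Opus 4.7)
The plan is a straightforward invariant-tracking argument: I will trace the algorithm to the unique location where entries are inserted into $\TLM[q][Z][2]$, and verify that the required witnesses are available at that moment.

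First, I will note that $\TLM[q][Z][2]$ is mutated only through the method $\TLM.\addpop$, which by inspection of Algorithm~\ref{algo:main} is invoked at exactly one place, namely line~\ref{line:appendpop}. Hence every tuple $(a,q'',Z'')\in\TLM[q][Z][2]$ was inserted during a specific iteration of the main while loop, and it suffices to verify that the required $(q',Z')$ and transition $t$ exist at the moment of insertion.

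Second, I will analyse the control flow leading to line~\ref{line:appendpop}. The call $\TLM.\addpop(q,Z,a,q'',Z'')$ is executed inside the for loop started at line~\ref{line:forloop}, after some pair $[(q,Z),(q',Z')]$ has been dequeued from $\Todo$ at the top of the while loop, and while iterating over outgoing transitions $t=q'\xrightarrow{g,\op,R}q''$ whose image yields $Z''=\overrightarrow{R(g\cap Z')}\neq\emptyset$. The enclosing branch is guarded by $\op=\pop_a$, so the transition $t$ is precisely of the shape demanded by the lemma and the equality $Z''=\overrightarrow{R(g\cap Z')}\neq\emptyset$ holds automatically by the for-loop's range condition.

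Finally, I will invoke Lemma~\ref{lem:todotlm}: since $[(q,Z),(q',Z')]$ was in $\Todo$ at the moment it was dequeued, we have $(q',Z')\in\TLM[q][Z][0]$ at that time. To lift this to any later point when the lemma is queried, I will observe the monotonicity of $\TLM$: scanning the methods listed at the start of Section~\ref{sec:algo}, the only mutators are $\add$, $\addpush$, $\addpop$, all of which only insert entries. Hence both $(q',Z')\in\TLM[q][Z][0]$ and $(a,q'',Z'')\in\TLM[q][Z][2]$ persist, and the claimed conclusion follows. I do not expect any genuine obstacle: the proof reduces entirely to invariant tracking over a monotone data structure, and the same template should suffice for the analogous push-side and node-side invariants needed in the remainder of the correctness argument.
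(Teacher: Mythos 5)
Your proof is correct and takes essentially the same route as the paper: identify the unique call site of $\addpop$ on Line~\ref{line:appendpop}, read off the transition $t=q'\xrightarrow{g,\pop_a,R}q''$ with $Z''=\overrightarrow{R(g\cap Z')}\neq\emptyset$ from the enclosing for loop and its $\op=\pop_a$ guard, and apply Lemma~\ref{lem:todotlm} to the dequeued pair $[(q,Z),(q',Z')]$ to obtain $(q',Z')\in\TLM[q][Z][0]$. Your explicit monotonicity step (that $\TLM$ only grows, so the witnesses persist to any later query time) is left implicit in the paper's proof of this lemma, though it appears in the proof of Lemma~\ref{lem:todotlm}; it is a welcome clarification but not a different argument.
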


\begin{proof}
  The method $\addpop$ is only used in Line~\ref{line:appendpop}, where $(a,q'',Z'')$ is
  added to $\TLM[q][Z][2]$.  When reaching this line, the pair extracted from $\Todo$ is
  $[(q,Z),(q',Z')]$ and by Lemma~\ref{lem:todotlm}, we get $(q',Z')\in\TLM[q][Z][0]$.  The
  for loop on Line~\ref{line:forloop} considers a transition
  $t=q'\xrightarrow{g,\op,R}q''$ such that $Z''=\overrightarrow{R(g\cap
  Z')}\neq\emptyset$.  Moreover, we have $\op=\pop_a$, which completes the proof.  
  
\end{proof}

\begin{lemma}[Pushes]\label{lem:push_constraint}
  If $(a,q,Z)\in\TLM[q''][Z_1][1]$, then there exist $(q',Z')\in\TLM[q][Z][0]$ and
  some transition $t=q'\xrightarrow{g,\push_a,R}q''$ such that
  $Z''=\overrightarrow{R(g\cap Z')}\neq\emptyset$ and $Z''\sim_{q''}Z_1$.  
\end{lemma}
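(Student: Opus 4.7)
The plan is to mirror the structure of the proof of Lemma~\ref{lem:pop_constraint}, tracing how the tuple $(a,q,Z)$ ended up in $\TLM[q''][Z_1][1]$ and collecting the side conditions that were necessarily checked at that point. The method $\addpush$ is invoked in exactly one place, namely Line~\ref{line:appPush}, so I would begin by observing that if $(a,q,Z)\in\TLM[q''][Z_1][1]$, then at some earlier point the algorithm executed $\TLM.\addpush(q,Z,a,q'',Z_1)$ on Line~\ref{line:appPush}.

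Next, I would analyze the local context at that call site. The pair extracted from the $\Todo$ list at the top of the while loop has the form $[(q,Z),(q',Z')]$ for some $(q',Z')$, and by Lemma~\ref{lem:todotlm} we immediately get $(q',Z')\in\TLM[q][Z][0]$, which provides the witness demanded by the statement. The outer for loop on Line~\ref{line:forloop} iterates over transitions $t=q'\xrightarrow{g,\op,R}q''$ such that $Z''=\overrightarrow{R(g\cap Z')}\neq\emptyset$; the branch of the $\ElsIf$ that leads to Line~\ref{line:appPush} is entered only when $\op=\push_a$, giving the required push transition with nonempty post-zone $Z''$.

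The only delicate point, and the one I expect to be the main obstacle, is showing $Z''\sim_{q''}Z_1$. Here one must read off the semantics of $\TLM.\isnewroot$ called on Line~\ref{line:getoldroot}. By definition, $\isnewroot(q'',Z'')$ returns $[\true, Z'']$ if no equivalent root exists (so $Z_1=Z''$), and otherwise returns $[\false, Z']$ for some $(q'',Z')\in\cS$ with $Z'\sim_{q''}Z''$ (so $Z_1=Z'$). In both cases, using reflexivity of $\sim_{q''}$ in the first case and the returned witness in the second, we obtain $Z_1\sim_{q''}Z''$, which is exactly the equivalence required.

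Putting these three observations together completes the proof; no induction or invariant-preservation argument is needed, since the statement is local to a single execution of Line~\ref{line:appPush} and $\TLM$ is monotone, so any entry present at query time must have been inserted with the side conditions above holding. The structural parallel with Lemma~\ref{lem:pop_constraint} makes the write-up short; the only conceptual difference is the extra step through $\isnewroot$, whose specification is precisely what provides the equivalence rather than mere equality of zones.
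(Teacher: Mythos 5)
Your proof is correct and follows essentially the same route as the paper's: trace the unique call site of $\addpush$ at Line~\ref{line:appPush}, obtain $(q',Z')\in\TLM[q][Z][0]$ via Lemma~\ref{lem:todotlm}, read the push transition and nonemptiness of $Z''$ off the for loop on Line~\ref{line:forloop}, and derive $Z''\sim_{q''}Z_1$ from the call to $\isnewroot$ on Line~\ref{line:getoldroot}. Your explicit case split on the return value of $\isnewroot$ (using reflexivity of $\sim_{q''}$ when $isNew=\true$) merely spells out a step the paper compresses into a single sentence.
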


\begin{proof}
  The method $\addpush$ is only used in Line~\ref{line:appPush}, when we add $(a,q,Z)$
  into $\TLM[q''][Z_1][1]$.  We have $Z_1\sim_{q''}Z''$ due to the call to $\isnewroot$ 
  on Line~\ref{line:getoldroot}.
  When reaching this line, the pair extracted from $\Todo$ is $[(q,Z),(q',Z')]$ and by
  Lemma~\ref{lem:todotlm}, we get $(q',Z')\in\TLM[q][Z][0]$.  The for loop on
  Line~\ref{line:forloop} considers a transition $t=q'\xrightarrow{g,\op,R}q''$ such that
  $Z''=\overrightarrow{R(g\cap Z')}\neq\emptyset$.  Moreover, we have $\op=\push_a$, which
  completes the proof.  
  
\end{proof}

Using the three properties above, we will now show soundness and completeness of Algorithm~\ref{algo:main}. We start with soundness, for which we need to define the notion of a mid-computation.
\begin{definition}[Mid-Computation (MC)]\label{def:mcdef}
  A mid-computation is any set $\mathcal{S}$ of pairs of nodes $[(q,Z),(q',Z')]$ 
  that can be obtained by applying the inductive rules in Table~\ref{tbl:rules-tpda} in
  some order (not necessarily until a fixed point is reached).  
  We say that $\TLM$ is a mid-computation when it encodes a triple 
  $(\mathcal{S},\mathcal{S}_\push,\mathcal{S}_\pop)$ where $\mathcal{S}$ is a 
  mid-computation.
\end{definition}

For example, just applying the start rule yields $\mathcal{S}=\{[(q_0,Z_0),(q_0,Z_0)]\}$,
which is a mid-computation. Now we can show the following lemma. 

\begin{lemma}[Soundness]\label{lem:algo-sound}
  At any point in Algorithm~\ref{algo:main}, 
  $\TLM$ is a mid-computation.
\end{lemma}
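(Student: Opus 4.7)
The plan is to prove the lemma by induction on the number of operations performed by Algorithm~\ref{algo:main} that modify $\TLM$. The invariant I would maintain is that the set $\mathcal{S}$ encoded by $\TLM$ is a mid-computation in the sense of Definition~\ref{def:mcdef}, i.e., obtainable by some sequence of applications of the rules of Table~\ref{tbl:rules-tpda}. Notice that the notion of mid-computation only constrains $\mathcal{S}$; the auxiliary sets $\mathcal{S}_\push$ and $\mathcal{S}_\pop$ are free to store any bookkeeping information required by the algorithm, so I do not need to track them for this invariant.

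For the base case, after the initialization calls $\TLM.\create()$ and $\TLM.\add(q_0,Z_0,q_0,Z_0)$, we have $\mathcal{S}=\{[(q_0,Z_0),(q_0,Z_0)]\}$, which is exactly the effect of a single application of the Start rule, and hence a mid-computation.

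For the inductive step, I would examine each iteration of the while loop. Upon extracting a pair $[(q,Z),(q',Z')]$ from $\Todo$, Lemma~\ref{lem:todotlm} gives $(q',Z')\in\TLM[q][Z][0]$, so $(q,Z)\in\cS$ and $(q',Z')\in S_{(q,Z)}$. Since $\mathcal{S}$ is only modified through calls to $\TLM.\add$, I would argue by cases on the stack operation of the transition under consideration. In the $\nop$ case, $\TLM.\add(q,Z,q'',Z'')$ matches exactly the antecedents of the Internal rule, with $\isnewnode$ enforcing the ``unless'' clause. In the $\push_a$ case, $\TLM.\add(q'',Z'',q'',Z'')$ corresponds to the Push rule, with the $\isnewroot$ test enforcing the ``unless $\sim_{q''}$'' clause; meanwhile the further calls $\TLM.\add(q,Z,q_2,Z_2)$ inside the inner loop over $\iterpop$ correspond to the Pop rule, where the missing antecedents $(q'_1,Z'_1)\in S_{(q'',Z_1)}$ and the pop transition producing $Z_2=\overrightarrow{R_1(g_1\cap Z'_1)}$ are supplied by Lemma~\ref{lem:pop_constraint}. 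The $\pop_a$ case is handled symmetrically: the calls $\TLM.\add(q_3,Z_3,q'',Z'')$ correspond to the Pop rule, and the push-side antecedents are supplied by Lemma~\ref{lem:push_constraint}.

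The main obstacle will be verifying that every antecedent of the Pop rule is actually in place at the moment of each $\TLM.\add$ call, since this rule has the largest number of premises and is triggered in two different branches of the algorithm. A subtle point is the use of the equivalent root $Z_1$ returned by $\isnewroot$ in place of the freshly computed $Z''=\overrightarrow{R(g\cap Z')}$: this substitution is precisely justified by the Pop rule's antecedent $Z''\sim_{q''}Z_1$, which the specification of $\isnewroot$ guarantees. Once the matching of each antecedent is checked, the invariant propagates through every modification step, which establishes that $\TLM$ encodes a mid-computation at every point of the execution.
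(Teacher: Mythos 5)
Your proposal takes essentially the same approach as the paper's proof: induction over the $\TLM$-modifying steps, with Lemma~\ref{lem:todotlm} legitimizing the pair extracted from $\Todo$, Lemmas~\ref{lem:pop_constraint} and~\ref{lem:push_constraint} supplying the missing antecedents of the Pop rule in the push and pop branches respectively, the $\isnewnode$/$\isnewroot$ tests enforcing the ``unless'' clauses, and the observation that $\isnewroot$ guarantees $Z''\sim_{q''}Z_1$, exactly as the paper argues via Line~\ref{line:getoldroot}. The details you defer (checking the subsumption tests on Lines~\ref{line:ifmatchpop} and~\ref{line:ifmatchpush} before each Pop-rule addition) are precisely what the paper's case analysis fills in, so the plan is correct as stated.
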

\begin{proof}
  We will prove this lemma by induction.
  
  \noindent\textbf{Base Case}: Just after execution of Line~\ref{line:startrule}, we have 
  $\mathcal{S} = \{[(q_0,Z_0),(q_0,Z_0)]\}$
  which is the smallest MC, just after applying the start rule.
  
  \noindent\textbf{Induction Hypothesis}: Let us say that after letting the algorithm
  execute for some time we stop it, and $\TLM$ is a mid-computation.
  
  \noindent\textbf{Induction}: Now we prove that no matter what happens, $\TLM$ will still
  be a MC. There are only a few statements in the algorithm that will change $\mathcal{S}$
  encoded by $\TLM$, and that is when some $\add$ method is called.  Apart from the
  initialization in Line~\ref{line:startrule} which has been considered in the base case,
  they all start with a pair $[(q,Z),(q',Z')]$ popped from $\Todo$ list.  According to
  Lemma~\ref{lem:todotlm}, we can see that $(q',Z')\in \TLM[q][Z][0]$, i.e.,
  $[(q,Z),(q',Z')]\in\mathcal{S}$ or equivalently $(q,Z)\in\cS$ and $(q',Z')\in S_{(q,Z)}$.  
  Now, we have several cases.
  \begin{itemize}[nosep]
    \item $\TLM.\add(q,Z,q'',Z'')$ on Line~\ref{line:internalrule}: 
    The transtion in consideration is $t=q'\xrightarrow{g,\nop,R}q''$ with
    $Z''=\overrightarrow{R(g\cap Z')}\neq\emptyset$.
    From the test $\TLM.\isnewnode(q,Z,q'',Z'')$, we know that $(q'',Z'')$
    is not already subsumed by an element in $S_{(q,Z)}$.  Therefore, the rule for 
    internal transitions
    can be applied, resulting in $\mathcal{S}:=\mathcal{S}\cup\{[(q,Z),(q'',Z'')]\}$.  And
    this corresponds to excecuting Line~\ref{line:internalrule}.  Therefore, the
    new $\TLM$ after Line~\ref{line:internalrule} is also a mid-computation.
    
    \item $\TLM.\add(q'',Z'',q'',Z'')$ on Line~\ref{line:pushrule}: Now,
    $t=q'\xrightarrow{g,\push_a,R}q''$ is a push transition with
    $Z''=\overrightarrow{R(g\cap Z')}\neq\emptyset$.
    The call to $\isnewroot$ on Line~\ref{line:getoldroot} returned $[true,Z'']$, indicating
    that $(q'',Z'')$ is a new root. Therefore, the push rule can be applied here, which is what
    is done in Line~\ref{line:pushrule} and $\TLM$ is still a mid-computation.
    
    \item $\TLM.\add(q,Z,q_2,Z_2)$ on Line~\ref{line:poprule1}: 
    The transition in consideration is $t=q'\xrightarrow{g,\push_a,R}q''$
    with $Z''=\overrightarrow{R(g\cap Z')}\neq\emptyset$.
    According to Line~\ref{line:getoldroot}, $Z_1\sim_{q''}Z''$.  Since we are looping on
    $\TLM.\iterpop(q'',Z_1,a)=\TLM[q''][Z_1][2][a]$, by Lemma~\ref{lem:pop_constraint},
    there exist a node $(q'_1,Z'_1)\in\TLM[q''][Z_1][0]$ and a transition
    $t_1=q'_1\xrightarrow{g_1,\pop_a,R_1}q_2$ with
    $Z_2=\overrightarrow{R_1(g_1\cap Z'_1)}\neq\emptyset$.
    Since $\TLM$ encodes $\mathcal{S}$, we have $[(q'',Z_1),(q'_1,Z'_1)]\in\mathcal{S}$, 
    i.e., $(q'',Z_1)\in\cS$ and $(q'_1,Z'_1)\in S_{(q'',Z_1)}$.  We can therefore apply the
    pop rule over here, unless $(q_2,Z_2)$ is subsumed by an existing node in $S_{(q,Z)}$,
    which is checked in Line~\ref{line:ifmatchpop}.
    Therefore the pop rule can be applied, which is exactly done in 
    Line~\ref{line:poprule1}, resulting in $\TLM$ being again a mid-computation.
    
    \item $\TLM.\add(q_3,Z_3,q'',Z'')$ on Line~\ref{line:poprule2}: 
    The main transition under consideration is $t=q'\xrightarrow{g,\pop_a,R}q''$
    with $Z''=\overrightarrow{R(g\cap Z')}\neq\emptyset$.
    We are looping on $(q_3,Z_3)$ in $\TLM.\iterpush(a,q,Z)=\TLM[q][Z][1][a]$.  Hence, by
    Lemma~\ref{lem:push_constraint}, that there exist $(q'_3,Z'_3)\in\TLM[q_3][Z_3][0]$
    and some transition $t_3=q'_3\xrightarrow{g_3,\push_a,R_3}q$ such that
    $Z''_3=\overrightarrow{R_3(g_3\cap Z'_3)}\sim_q Z$.
    Since $\TLM$ encodes $\mathcal{S}$, we have $(q_3,Z_3)\in\cS$ and $(q'_3,Z'_3)\in
    S_{(q_3,Z_3)}$.  We can therefore apply the pop rule here as well, if the node
    $(q'',Z'')$ is not subsumed by any other node in $S_{(q_3,Z_3)}$, which is checked in
    Line~\ref{line:ifmatchpush}.  The pop rule can therefore be applied and the resulting
    $\TLM$ is still a mid-computation.
  \end{itemize}
\end{proof}

Next we turn to completeness of the algorithm.
\begin{lemma}[Completeness]\label{lem:algo-complete}
  After the termination of Algorithm~\ref{algo:main}, the set $\mathcal{S}$ encoded by 
  $\TLM$ is 
  a fixed point of the inductive rules in Table~\ref{tbl:rules-tpda}, that is, no new
  node can be added by applying one of these rules.
\end{lemma}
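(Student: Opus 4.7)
The plan is to show that at the moment Algorithm~\ref{algo:main} halts, every inductive rule from Table~\ref{tbl:rules-tpda} whose premises are realized by the encoded $\mathcal{S}$ has its conclusion already witnessed, i.e., the ``unless'' clause is satisfied. The starting point is a processing lemma: every pair $[(q,Z),(q',Z')]$ inserted into $\mathcal{S}$ via $\TLM.\add$ is simultaneously pushed into $\Todo$ (Lemma~\ref{lem:todotlm}); the $\isnewnode$ and $\isnewroot$ guards forbid duplicate insertions; and since $\Todo$ is empty at termination, each such pair has been extracted and the for loop on Line~\ref{line:forloop} has considered every outgoing transition from $(q',Z')$ exactly once. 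An analogous statement will hold for entries of $\mathcal{S}_\push$ and $\mathcal{S}_\pop$, which by Lemmas~\ref{lem:push_constraint} and~\ref{lem:pop_constraint} arise only from such processed pairs.

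Using this, I would then dispatch Start, Internal, and Push quickly. Start is immediate from Line~\ref{line:startrule}. For Internal, given premises $(q,Z)\in\cS$, $(q',Z')\in S_{(q,Z)}$ and an internal transition to $(q'',Z'')\neq\emptyset$, the processing lemma shows that $[(q,Z),(q',Z')]$ was popped; the for loop tried exactly this transition and the guard on Line~\ref{line:nopisnewnode} either exposed a subsuming $(q'',Z''')\in S_{(q,Z)}$ or added $(q'',Z'')$ itself. The Push rule is handled identically via the call to $\isnewroot$ on Line~\ref{line:getoldroot}, using canonical-root uniqueness (each $\sim_{q''}$-equivalence class admits at most one root in $\cS$) to guarantee that the canonical root returned is the $(q'',Z_1)$ required by the rule.

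The main obstacle is the Pop rule, whose premises involve two pairs $[(q,Z),(q',Z')]$ and $[(q'',Z_1),(q'_1,Z'_1)]$ that are processed at some respective times $t_{\mathrm{push}}$ and $t_{\mathrm{pop}}$ in either order. The key observation is that the algorithm performs the match from both sides: when processing the push (Lines~\ref{line:matchpopstart}--\ref{line:matchpopend}) it iterates $\iterpop$, and when processing the pop (Lines~\ref{line:matchpushstart}--\ref{line:matchpushend}) it iterates $\iterpush$. I would first argue that by termination $[(q,Z),a,(q'',Z_1)]\in\mathcal{S}_\push$ (using canonical-root uniqueness and Line~\ref{line:appPush}) and that some subsuming pop entry $[(q'',Z_1),a,(q_2,Z_2')]$ with $Z_2\preceq_{q_2}Z_2'$ is in $\mathcal{S}_\pop$, where either the exact entry is inserted on Line~\ref{line:appendpop} or an earlier subsuming entry was kept due to $\isnewpop$ returning $\false$. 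Then a case analysis on which of the push entry or the subsuming pop entry was inserted later shows that at that later insertion the matching for loop (Line~\ref{line:formatchpopstart} or Line~\ref{line:formatchpushstart}) iterated the earlier entry and invoked $\isnewnode(q,Z,q_2,Z_2')$ on Line~\ref{line:ifmatchpop} or~\ref{line:ifmatchpush}, which either added $(q_2,Z_2')$ to $S_{(q,Z)}$ or exposed a subsumer already present; transitivity of $\preceq_{q_2}$ then yields a witness for $(q_2,Z_2)$. The delicate point, to be handled carefully, is the transfer of subsumption through the possibly earlier $\mathcal{S}_\pop$ entry, which relies on the fact that both the push and the pop matching are performed as soon as either side is recorded.
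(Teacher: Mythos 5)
Your proposal is correct and follows essentially the same route as the paper's proof: rule-by-rule verification with Start/Internal/Push dispatched via the $\Todo$ processing discipline and the $\isnewnode$/$\isnewroot$ guards, root uniqueness under $\sim$ to pin down $Z_1$, and for the Pop rule the identical two-sided matching argument with a case analysis on the relative insertion times of the push entry and the (possibly subsuming) pop entry, closed off by transitivity of $\preceq_{q_2}$. The ``delicate point'' you flag is exactly what the paper handles by defining $T_\pop$ as the insertion time of the subsuming entry $Z'_2$ itself, so no further work is needed.
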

\begin{proof}
We show that the rules defined in Table~\ref{tbl:rules-tpda} will not add any new 
pair of nodes to the set $\mathcal{S}$ defined by $\TLM$.
\begin{itemize}[nosep]
  \item \textbf{Start Rule}: After Line~\ref{line:startrule} of the algorithm we 
  see that $[(q_0,Z_0),(q_0,Z_0)]\in\mathcal{S}$.  Therefore, applying the start rule would not
  add any new pair of nodes.

  \item \textbf{Internal Rule}: Let $(q,Z)\in\cS$ and $(q',Z')\in S_{(q,Z)}$, i.e.,
  $[(q,Z),(q',Z')]\in\mathcal{S}$.  Let $t=q'\xrightarrow{g,\nop,R}q''$ be an internal
  transition and $Z''=\overrightarrow{R(g\cap Z')}\neq\emptyset$.  Applying the internal
  rule in Table~\ref{tbl:rules-tpda} would add $(q'',Z'')$ to $S_{(q,Z)}$ unless there is
  $(q'',Z''')\in S_{(q,Z)}$ with $Z''\preceq_{q''}Z'''$.
  Now, since $[(q,Z),(q',Z')]\in\mathcal{S}$ then $\TLM.\add(q,Z,q',Z')$ was executed,
  and $[(q,Z),(q',Z')]$ was also added to $\Todo$.  When $[(q,Z),(q',Z')]$ was removed from
  $\Todo$, the internal transition $t$ was considered.  Hence, either the test
  $\TLM.\isnewnode(q,Z,q'',Z'')$ returned true and $\TLM.\add(q,Z,q',Z')$ was executed; we
  get $(q'',Z'')\in S_{(q,Z)}$.  Or the test returned false and we had some
  $(q'',Z''')\in\TLM[q][Z][0]=S_{(q,Z)}$ such that $Z''\preceq_{q''}Z'''$.  In both cases,
  the rule for internal transitions is satisfied.
  
  \item \textbf{Push Rule}: Let $(q,Z)\in\cS$ and $(q',Z')\in S_{(q,Z)}$, i.e.,
  $[(q,Z),(q',Z')]\in\mathcal{S}$.  Let $t=q'\xrightarrow{g,\push_a,R}q''$ be a push
  transition and $Z''=\overrightarrow{R(g\cap Z')}\neq\emptyset$.  Since
  $[(q,Z),(q',Z')]\in\mathcal{S}$ then $\TLM.\add(q,Z,q',Z')$ was executed, and
  $[(q,Z),(q',Z')]$ was also added to $\Todo$.  When $[(q,Z),(q',Z')]$ was removed from
  $\Todo$, the push transition $t$ was considered and $\TLM.\isnewroot(q'',Z'')$ was
  called and returned $[isNew,Z_1]$.  If the boolean $isNew$ is true then
  Line~\ref{line:pushrule} is executed resulting in $[(q'',Z''),(q'',Z'')]$, i.e.,
  $(q'',Z'')\in\cS$ and $(q'',Z'')\in S_{(q'',Z'')}$.  Otherwise, $(q'',Z_1)\in\cS$ and
  $Z''\sim_{q''}Z_1$.  In both cases, the rule for push transitions is satisfied.
  
  \item \textbf{Pop Rule}: Let the rule premises be $(q,Z) \in \cS$, $(q',Z')\in
  S_{(q,Z)}$, $t=q'\xrightarrow{g,\push_a,R}q''$, $Z''=\overrightarrow{R(g\cap Z')}\sim
  Z_1$; and $(q'',Z_1)\in\cS$, $(q_1',Z_1')\in S_{(q'',Z_1)}$,
  $t_1=q'_1\xrightarrow{g_1,\pop_a,R_1}q_2$, $Z_2=\overrightarrow{R_1(g_1\cap
  Z'_1)}\neq\emptyset$.  We will show that there is $(q_2,Z_2^{\dagger})\in S_{(q,Z)}$
  with $Z_2\preceq_{q_2}Z_2^{\dagger}$ so that the conclusion of the pop rule is
  satisfied.
  
  First, by the argument used earlier with respect to $[(q,Z),(q',Z')]\in\mathcal{S}$, we
  can say that the transition $t$, hence also the node $(q'',Z'')$, should have been
  considered while popping $[(q,Z),(q',Z')]$ from the $\Todo$ list.  Since no two nodes in
  the first level map of $\TLM$ are equivalent, we can say that $Z_1$ used in
  Line~\ref{line:matchpopstart}-\ref{line:matchpopend} coincide with $Z_1$ used here.  Due
  to Lines~\ref{line:matchpopstart}-\ref{line:appPush}, we get
  $(a,q,Z)\in\TLM[q''][Z_1][1]$.  Let us note $T_\push$ the time when 
  $\TLM.\addpush(q,Z,a,q'',Z_1)$ was executed.
  Note that this time could have been earlier than when
  transition $t$ was considered if condition on Line~\ref{line:matchpopstart} was not
  satisfied.
  
  With respect to $(q'_1,Z'_1)\in S_{(q'',Z_1)}$, i.e.,
  $[(q'',Z_1),(q'_1,Z'_1)]\in\mathcal{S}$, we can say that transition $t_1$, hence also
  node $(q_2,Z_2)$, should have been considered when $[(q'',Z_1),(q'_1,Z'_1)]$ was removed
  from the $\Todo$ list.  Notice the correspondence between the notation in the algorithm
  and the notation above, correspondence given by:
  \begin{align*}
    q & \mapsto q'' & Z &\mapsto Z_1 & q' & \mapsto q'_1 & Z' &\mapsto Z'_1 \\
    q'' & \mapsto q_2 & Z'' &\mapsto Z_2 & q_3 & \mapsto q & Z_3 &\mapsto Z 
  \end{align*}
  Hence, the test on Line~\ref{line:matchpushstart} translates to
  $\TLM.\isnewnode(q'',Z_1,a,q_2,Z_2)$.  If it returns false, there must be some
  other node $(q_2,Z_2')\in\TLM[q''][Z_1][2][a]$ with $Z_2\preceq_{q_2}Z_2'$.  If
  it returns true then $\TLM.\addpop(q'',Z_1,a,q_2,Z_2)$ is executed, resulting in
  $(q_2,Z_2)\in\TLM[q''][Z_1][2][a]$, and to unify the notation, we let $Z'_2=Z_2$.
  Let us note  $T_\pop$ the time when $\TLM.\addpop(q'',Z_1,a,q_2,Z'_2)$ was executed.
    
  We have the following cases now
  \begin{itemize}
    \item $T_\push > T_\pop$: In this case, the for loop Line~\ref{line:formatchpopstart}
    is executed with node $(q_2,Z_2')$ in $\TLM.\iterpop(q'',Z_1,a)=\TLM[q''][Z_1][2][a]$.
    Due to Lines~\ref{line:ifmatchpop}-\ref{line:poprule1}, we get either $(q_2,Z'_2)\in
    \TLM[q][Z][0]=S_{(q,Z)}$ or there is some $(q_2,Z_2^{\dagger})\in
    \TLM[q][Z][0]=S_{(q,Z)}$ such that $Z'_2\preceq_{q_2}Z_2^{\dagger}$.  In both cases,
    using $Z_2\preceq_{q_2}Z'_2$, we deduce that the pop rule is satisfied.
    
    \item $T_\push < T_\pop$: In this case, the for loop Line~\ref{line:formatchpushstart}
    is executed with node $(q,Z)$ in $\TLM.\iterpush(a,q'',Z_1)=\TLM[q''][Z_1][1][a]$
    (recall the correspondence of notations).  Due to
    Lines~\ref{line:ifmatchpush}-\ref{line:poprule2}, we get either $(q_2,Z_2)\in
    \TLM[q][Z][0]=S_{(q,Z)}$ or there is some $(q_2,Z_2^{\dagger})\in
    \TLM[q][Z][0]=S_{(q,Z)}$ such that $Z_2\preceq_{q_2}Z_2^{\dagger}$.  In both cases, we
    deduce that the pop rule is satisfied.  
  \end{itemize}
\end{itemize}
\end{proof}

Finally, from Lemma~\ref{lem:term_lemma}, Lemma~\ref{lem:algo-sound}, 
Lemma~\ref{lem:algo-complete} and Theorem~\ref{thm:tpdarules}, we obtain:

\begin{theorem}\label{thm:main-algo}
  The set $\mathcal{S}$ encoded by $\TLM$ computed by Algorithm~\ref{algo:main} is a fixed
  point obtained starting from the empty set by applying the inductive rules in
  Table~\ref{tbl:rules-tpda}.  Therefore, Algorithm~\ref{algo:main} terminates and is
  sound and complete for well-nested control state reachability of pushdown timed
  automata.
\end{theorem}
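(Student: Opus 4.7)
The plan is to assemble this theorem as a short corollary of the four preceding results, since all the substantive work has already been done. First I would invoke Lemma~\ref{lem:term_lemma} to conclude that Algorithm~\ref{algo:main} terminates on every input. Next, I would apply Lemma~\ref{lem:algo-sound} to argue that at every point during execution the state $\TLM$ encodes a mid-computation in the sense of Definition~\ref{def:mcdef}; in particular, this holds at the moment of termination. Finally, I would invoke Lemma~\ref{lem:algo-complete} to obtain that, upon termination, the set $\mathcal{S}$ encoded by $\TLM$ is also a fixed point of the rules in Table~\ref{tbl:rules-tpda}. Combining these, the terminal value of $\mathcal{S}$ corresponds (via the bijection spelled out at the beginning of Section~\ref{sec:algo}) to some pair $\cS^{*}$, $(S_{(q,Z)})_{(q,Z)\in\cS^{*}}$ that is a fixed point of the inductive rules and contains $(q_0,Z_0)$ in $\cS^{*}$ thanks to the Start rule.

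Having identified the algorithm's output as such a fixed point, I would then translate the reachability claim via Theorem~\ref{thm:tpdarules}. For completeness of the algorithm: given a well-nested run $(q_0,v_0,\varepsilon)\xrightarrow{*}(q,v,\varepsilon)$ of $\cA$ with $q\in F$, since $\{v_0\}\preceq_{q_0} Z_0$ and $(q_0,Z_0)\in\cS^{*}$, Theorem~\ref{thm:tpdarules}.2 provides some $(q,Z)\in S_{(q_0,Z_0)}$, and the algorithm therefore reports $q$ as reachable. For soundness of the algorithm: if the algorithm reports some $(q,Z)\in S_{(q_0,Z_0)}$ with $q\in F$, then picking any $v'\in Z$, Theorem~\ref{thm:tpdarules}.3 produces an actual run $(q_0,v,\varepsilon)\xrightarrow{*}(q,v'',\varepsilon)$ with $v\in Z_0$ (and $v_0\in Z_0$ in particular), witnessing well-nested reachability of $q$.

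There is no real obstacle in the proof of this theorem itself: the delicate parts, namely ensuring the three red subsumption and equivalence conditions suffice for both termination and faithfulness to the rule system, are already resolved in Lemmas~\ref{lem:term_lemma}, \ref{lem:algo-sound}, and~\ref{lem:algo-complete}, while the bridge from the rule system to timed semantics is provided by Theorem~\ref{thm:tpdarules}. The only minor point to state carefully is the correspondence between the single-set representation $\mathcal{S}$ used by the algorithm and the root/nested decomposition $(\cS, S_{(q,Z)})$ used by the rules; I would simply refer to the dictionary introduced at the start of Section~\ref{sec:algo} rather than rederive it.
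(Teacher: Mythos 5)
Your proposal is correct and follows exactly the paper's route: the paper derives Theorem~\ref{thm:main-algo} precisely by combining Lemma~\ref{lem:term_lemma} (termination), Lemma~\ref{lem:algo-sound} (the terminal $\TLM$ is a mid-computation), Lemma~\ref{lem:algo-complete} (it is a fixed point), and Theorem~\ref{thm:tpdarules} applied at the root $(q_0,Z_0)$, with the same $\mathcal{S}$-versus-$(\cS,S_{(q,Z)})$ dictionary from the start of Section~\ref{sec:algo}. Your write-up is in fact more explicit than the paper's (which states the combination in one line); the only hair to split is that in the soundness direction the run produced by Theorem~\ref{thm:tpdarules}.3 starts from some $v\in Z_0$ rather than from $v_0$ itself, which is harmless since $Z_0=\overrightarrow{\{v_0\}}$ means every such $v$ is reached from $v_0$ by a delay transition.
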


\section{Experiments and Results}
\label{sec:experiments}
We build on the existing architecture of an open-source tool for analysis of timed automata, TChecker~\cite{tchecker}.  Our tool is freely available for download and use at \url{https://github.com/karthik-314/PDTA_reachability}. Currently, our implementation uses only the LU-simulation~\cite{BBLP04} and hence we only deal with diagonal-free timed automata. But we could also extend it to use $\mathcal{G}$-simulations from~\cite{GMS19}, which would allow us to handle diagonal constraints as well.

\subsection{Implementation}
The input for our implementation are PDTA, rather than TA so we modify TChecker in order to run our experiments.  While most of the TChecker file format will remain the same, the only place where we make a change to the syntax of the input, will be the edges.  TChecker uses the following format, for its transitions,
\begin{verbatim}
    edge:<Process>:<src>:<tgt>:<label>{
        do:<Reset1(x=0)> ; <Reset2(y=0)> :
        provided: <guard1(x==0)> && <guard2(y>=1)>}
\end{verbatim}
The new format in order to incorporate  the pushes and pops will be,
\begin{verbatim}
    edge:<Process>:<src>:<tgt>:<label>{
        do:<Reset1(x=0)> ; <Reset2(y=0)> :
        provided: <guard1(x==0)> && <guard2(y>=1)>}
        [<push/pop>:<symbol>]
\end{verbatim}
In case the operation is $\nop$, the square brackets are left empty.

We have implemented two variants of Algorithm~\ref{algo:main} for PDTA and we will compare these between each other and also with a region-based approach.  More precisely, we consider the following 3 algorithms:
\begin{itemize}[nosep]
  \item \textbf{Simulation Based Approach} ($\preceq_{LU}$): Direct implementation of
  Algorithm~\ref{algo:main}.

  \item \textbf{Equivalence Based Approach} ($\sim_{LU}$): This is a variation of
  Algorithm~\ref{algo:main}, with two methods changed,
    \begin{itemize}
      \item \verb|TLM.isNewNode|$(q,Z,q',Z')$: Returns \verb|false| if
      $\exists[(q,Z),(q',Z'')] \in \mathcal{S}$ with $Z' \sim_{q'} Z''$, and \verb|true|
      otherwise.

      \item \verb|TLM.isNewPop|$(q,Z,a,q',Z')$: Returns \verb|false| if
      $\exists[(q,Z),a,(q',Z'')] \in \mathcal{S}_{pop}$ with $Z' \sim_{q'} Z''$, and
      \verb|true| otherwise.
      
    \end{itemize}
  As mentioned in
  Section~\ref{sec:rewrite-rules}, if instead of simulation, we just use equivalence
  everywhere, we do obtain a correct algorithm for reachability in PDTA. Hence it is
  interesting to compare it with the above approach.  
  
  \item \textbf{Region Based Implementation} ($RB$): A previous
  implementation~\cite{AkshayGKS17}, uses a region based approach in order to solve the
  non-emptiness problem in PDTA. We note two features of the algorithm.  First, it uses a
  tree-automaton based approach for efficiency and correctness, but underlying it is the
  region (rather than zone) construction.  Second, it works only with closed guards, while
  our approach works with closed and open guards.
\end{itemize}

We note the following important points regarding our implementation:
\begin{enumerate}[nosep]
    \item The $\preceq$ used in our implementation will be $\preceq_{LU}$~\cite{BBLP04},
    without extrapolation and with global clock bounds.

    \item The \verb|ToDo| list used currently uses LIFO (stack) ordering for popping of
    elements.  This corresponds to a DFS exploration of the zone-graph.  But we can use
    other data structures for this purpose as well, e.g., changing it to FIFO would give
    us a BFS exploration etc.

    \item Both the simulation based and equivalence based approach are tested on PDTA with
    empty and non-empty languages, but we have ensured that both of them return an answer
    only after the entire exploration has been completed.  In other words, we do not stop
    the exploration when we reach a final state.  This is to make fair comparisons, where
    we do not terminate because of being ``lucky'' in encountering the final state early.
    Of course in practice we would not do this.  In contrast, we note that the $RB$
    approach is an on the fly approach which returns non-empty as soon as the final state
    turns out to be reachable.
\end{enumerate}

All experiments are run on Intel-i5 10th Generation processor, with an 8GB RAM, with a timeout of $120$ seconds.
\subsection{Benchmarks} We used a total of 10 benchmarks in our experiments, but
parameterized several of them in order to test the scalability and to give us more insight
into performance comparisons.  The benchmark and their parameterizations are
explained in~\ref{app:benchmarks}.  We highlight only
some salient points here.
The benchmark $B_1$ is the PDTA from Figure~\ref{fig:tpda}.  $B_2(k)$ is directly adapted
from Figure~\ref{fig:tpda2} with the constant $y\leq 1$ parametrized to $y\leq k$, and
$k+1$ pops between $q_0$ and $q_2$.  Note that $q_3$ is unreachable regardless of the
value of $k$.  Benchmarks $B_3,B_4$ are adapted from~\cite{AkshayGKS17} with $B_3$
involving untiming of a stack age into normal clocks.  $B_5,B_6$ involve significant
interplay of push/pop edges and clocks and $B_6,B_7$ also have open guards.  More details
can be found in~\ref{app:benchmarks}.  We also note that automata $B_1$, $B_3(3, 4),
B_5(k_1,k_2), B_8, B_9(k_1,k_2)$ accept a nonempty language, while the rest are empty.  As
described earlier this does not change the performance of the simulation and equivalence
based approaches, but may significantly change the performance of the Region Based
Approach.
\begin{table}[t]
    \centering
      \begin{tabular}{ |p{2.3cm}||p{1.2cm}|p{1.4cm}|p{1.3cm}|p{1.4cm}|p{1.2cm}|p{1.2cm}|  }
 \hline
  Benchmark& \quad $\preceq_{LU}$ & \quad $\preceq_{LU}$ & \quad $\sim_{LU}$ &\quad $\sim_{LU}$& \quad $RB$ & \quad $RB$\\
 & Time & \# nodes & Time & \# nodes& Time & \# nodes\\
 \hline
 $B_1$&   0.2  & 17& 0.2 &17 &235.6 &4100\\
 $B_2(10)$   & 0.8    &77&   0.8 &77 & 6835.8 &30200\\
 $B_2(100)$&   20.0  & 5252   &20.7 &5252 & T.O. & $\geq$154700\\
 $B_3(4,3)$& 0.2  & 6   & 0.2 & 6 & 1043.8 & 14300\\
 $B_3(3,4)$& 0.2  & 9 &0.2 &9 & 98.8 & 3400\\
 $B_4$    &0.2 & 8&  0.1 & 8&0.3 &17 \\
 $B_5(100,10)$& 0.8 & 202 & 5.4 &2212 &OoM &OoM\\
 $B_5(100,1000)$& 0.7 & 202 & 3564.3 &201202 &OoM &OoM\\
 $B_5(5000,100)$& 23.2& 10002 &3429.3 &1010102 &OoM &OoM\\
 $B_6(5,4,1000)$& 0.3 & 30 &611.8 &30047 &NA &NA\\
 $B_6(5,4,10000)$& 0.3 & 30 &60271.9 &300047 &NA &NA\\
 $B_6(501,500,100)$& 38.2 & 3006 &501.0 &34799 &NA &NA\\
 $B_{7}$& 112.4 & 4475 & 113.1 &4475 &NA &NA\\
 \hline
\end{tabular}
    \caption{Results on the Benchmarks.  Time recorded in $ms$, and timeout
    (T.O.) used is 120 seconds.  OoM stands for Out of Memory kill.
    Results rounded up to 1 decimal.  \# nodes refers to the number of nodes in the
    zone/region graph explored.  In case of timeout $\geq n$, refers to recorded number of
    nodes $n$ before timeout occurred.  NA in $RB$ columns represents that the region
    based approach does not handle open guards in transitions ($B_6$, $B_7$ have open
    guards.)}
    \label{tab:final_results}
\end{table}

\subsection{Results}
Table~\ref{tab:final_results} contains a selection of our experimental results; more detailed results can
be found in~\ref{app:all_results}.  From the table, we
conclude first that the zone based approach is indeed faster than the Region Based
Approach for all examples.  Second, the simulation based approach runs faster than the
equivalence based approach for all examples if the \verb|ToDo| priority for removal
remains the same.  In fact, the performance of the simulation based approach depends
mostly on the size of the PDTA, but the equivalence based approach is dependant on the
constants used in guards as well, which is even more the case for the region based approach.  Finally, our approach can easily handle closed and open guards.

Most of the timeouts that occurred during the experiments are due to Out of Memory
(OoM) kills, especially for larger sized PDTAs.  For smaller sized PDTA such as
$B_2(100)$, the recorded number of nodes before timeout was 154700.

Regarding the performance, we would like to emphasize that B1, B2, B3, B4, B7 were
designed to compare the Zone approach to the region (RB) approach.  As a consequence these
models are very simple and the number of explored nodes remains almost the same regardless
of whether we use $\sim$ or $\preceq$ to prune, which reflects in the times/sizes not
being too different.  However, the other examples B5, B6 are more complex and have nodes
that get pruned during exploration (both using $\sim$ and $\preceq$).  Here we can see the
clear improvement of $\preceq$ over $\sim$ both in terms of time taken and also of number
of explored nodes.

\section{Discussion and Future work}
\label{sec:conclusion}

In this paper, we examined how an unbounded stack can be integrated seamlessly with
zone-abstractions in timed automata.  We would like to point out that two easy extensions
of our work are possible.  First, as remarked earlier, our algorithm checks for
well-nested reachability, i.e., it requires to reach a final state with empty stack for
acceptance.  But we can generalize this to general control-state reachability by showing
that a control state $q$ is reachable in the PDTA (with possibly a non-empty stack) iff
some node $(q,Z)$ is discovered by our algorithm and added to some $S_{(q',Z')}$ (and not
just to $S_{(q_0,Z_0)}$ as in the well-nested case).  While this idea is simple and
requires only minor edits to the existing algorithm, the proof of correctness requires
more work and we leave this for future work.

Secondly, we can handle the model with ages in stack as in~\cite{AGK16,AbdullaAS12} with
an exponential blowup (thanks to~\cite{ClementeL15}). However, an open question is
whether this blowup can be avoided in practice.  As noted earlier, there exist
extensions~\cite{CLLM17,CL21} studied especially in the context of binary reachability,
which are expressively strictly more powerful, for which decidability results are known.
It would be interesting to see how we can extend the zone-based approach to those models.

Finally, it seems interesting to examine further the link to the liveness problem,
possibly allowing us to transfer ideas and obtain faster implementations.  Another
possibility would be to use the extrapolation operator (rather than, or in addition to,
simulation), which we have not considered in this work.

\bibliographystyle{plain}

\bibliography{main_cav_arxiv_lipics_bib}

\appendix
\section{Appendix}
\subsection{Benchmarks Used}\label{app:benchmarks}
We have used a total of 10 benchmarks. Of these the following are the ones which accept an empty language, $B_2(k)$ for all values of $k$, $B_4$, $B_6(k_1,k_2,k_3)$ for $k_1\geq k_2$, and $B_7$. The rest of the PDTA accept non-empty languages.
The benchmarks we have used are described as follows,
\begin{itemize}
    \item $B_1$: The benchmark $B_1$ in Figure~\ref{fig:b1} is an adaptation of the $PDTA$ in Figure~\ref{fig:tpda}. Instead of 3 pushes we have 8, and instead of, $y\leq3$ we have $y\leq10$. The state $q_1$ is reachable in the benchmark. We use this benchmark, since the zone graph for this $PDTA$ without any simulation will be infinitely large.
    \item $B_2(k)$: This is an adaptation of the $PDTA$ in Figure~\ref{fig:tpda2}, with $k+1$ pops between $q_0$, and $q_3$, and $y\leq k$ replacing $y\leq1$. The PDTA has $k+4$ states, and state $q_2$ is unreachable irrespective of the value of $k$. We have chosen this as a benchmark since in order to say that $q_2$ is unreachable, the loop $q_0\xrightarrow[]{}q_1\xrightarrow[]{}q_0$, has to be taken as many times as possible, creating more number of root nodes, with each root node containing the set of nodes that it can reach, making the size of the entire \verb|TLM| proportional to $k^2$. Therefore in order to accurately say that $q_2$ is unreachable, it would take $\mathcal{O}(k^2)$ time.
    \item $B_3(k_1,k_2)$: We have made this benchmark using a $TPDA$, with timed stack, and converting it to a timeless stack $PDTA$ using an approach similar to the one proposed in~\cite{ClementeL15}. The $TPDA$ with timed stack is shown in Figure~\ref{fig:timetountime1}. In both Figures~\ref{fig:timetountime1} and~\ref{fig:b3}, we can see that state $s_1$ is reachable only if $k_1\leq k_2$. This creates a significant difference in execution times when it comes to $B_3(4,3)$, and $B_3(3,4)$.
    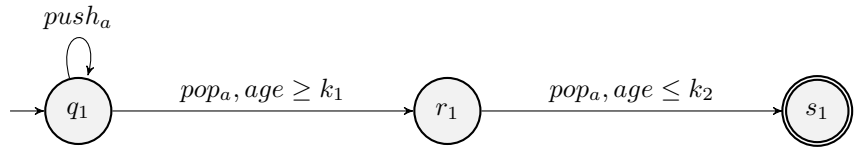
\begin{figure}[!h]
    \centering
        \begin{tikzpicture}[]
            \node[state, initial] (q) {$q_1$};
            \node[state, right=4cm of q] (r) {$r_1$};
            \node[state, accepting, right=4cm of r] (s) {$s_1$};
        
            \draw (q) edge[loop above] node[above]{$push_a$} (q)
            (q) edge[] node[above]{$pop_a,age\geq k_1$} (r)
            (r) edge[] node[above]{$pop_a,age\leq k_2$} (s);
        \end{tikzpicture}
        \caption{Timed Stack $TPDA$ used for conversion to $B_3(k_1,k_2)$.}
        \label{fig:timetountime1}
    \end{figure}
    \item $B_4$: This benchmark was originally used for testing in the region based implementation~\cite{AkshayGKS17}. Since there was only one push-pop pair, and no cycles with push/pop in it, it could be simulated using an extra clock $x_3$. The language accepted by the automaton is empty, since $q_5$ is unreachable.
    \item $B_5(k_1,k_2)$: An illustration of this benchmark is shown in Figure~\ref{fig:b5}, with $k_1=4$. $k_1$ should be even for $q_1$ to be reachable. This PDTA benchmark is used since a large number of pushes are required to reach $q_1$, creating more root nodes. Also because of internal transition loops between $r_x$, and $r_x'$ in the PDTA which can create many non equivalent zones, this creates the possibility of a large \verb|TLM|.
    \item $B_6(k_1,k_2,k_3)$: The illustration for this benchmark is shown in Figure~\ref{fig:b6}. It has been used to indicate that the size of the TLM is not just dependent on the size of the automaton, but also on the constants used in the automaton. In this PDTA, the size of the \verb|TLM| is dependent on all $k_1$, $k_2$, and $k_3$. $k_1$, and $k_2$ enforce the pushes and pops in the automaton to be taken a fixed number of times, and if $k_1\geq k_2$,  the language accepted is empty.
    \item $B_7$: The benchmark illustrated in Figure~\ref{fig:b7} has been used also in order to highlight the significance of open guards. The guards over the pushes, $push_a$, and $push_b$, make it impossible to take two pushes on $a$, then a push on $b$, which is required to empty the stack on reaching $q_2$. And the guards $x==0\land z==20$, also make it harder to directly reach $q_2$ with an empty stack.
    \item $B_8$: A variation of this $PDTA$ was used in testing for the region based implementation as well. States $\{q_1,q_3,q_5,q_6,q_8\}$ are reachable, and the language is not empty.
    \item $B_9(k_1,k_2)$: The PDTA for this benchmark with $k_1=2$ is illustrated in Figure~\ref{fig:b9}. This benchmark has been used in order to force the automaton to take only a single path which is allowed. Also the extra 2 state loops on nodes like $r_2$, can create many non-equivalent nodes under one root node, and these nodes can then be propagated to other root nodes via pushes and pops, which can make the execution time heavily dependent on the constant $k_2$.
    \item $B_{10}$: This benchmark has been used in order to show that our tool can also handle open constraints on guards. This benchmark like $B_6$ one can have the size of \verb|TLM| highly dependent on the constants in transition guards.
\end{itemize}
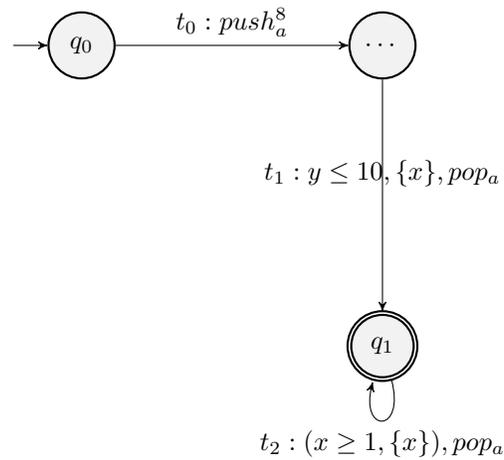
\begin{figure}[H]
\centering
\begin{tikzpicture}[node distance = 4cm]
    \node[state, initial] (q0) {$q_0$};
    \node[state, right of=q0] (cdots) {$\cdots$};
    \node[state, accepting, below of=cdots] (q1) {$q_1$};

    \draw (q0) edge[] node[above]{$t_0:push_a^{8}$} (cdots)
    (cdots) edge[] node[above]{$t_1:y\leq10,\{x\},pop_a$} (q1)
    (q1) edge[loop below] node[below]{$t_2:(x\geq 1,\{x\}),pop_a$} (q1);
\end{tikzpicture}
\caption{$B_1 : PDTA$. States $q_0$, and $q_1$ are reachable. The $\cdots$ indicates that there are a series of states, say, $\{r_1,r_2,...r_8\}$ between $q_0$, and $q_1$, with 8 pushes, starting from $q_0\xrightarrow[]{}r_1$, to $r_7\xrightarrow[]{}r_8$. And in the end there transition $t_1$ is in between $r_8$ and $q_1$.}
\label{fig:b1}
\end{figure}
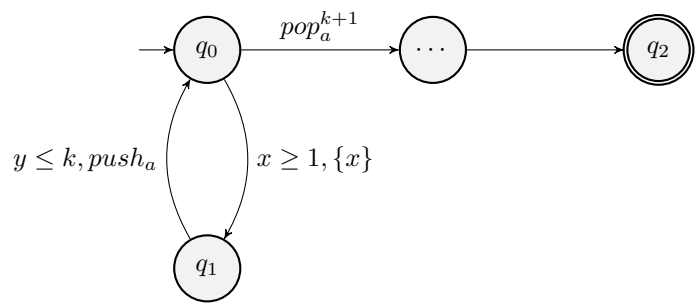
\begin{figure}[H]
\centering
\begin{tikzpicture}[node distance = 3cm]
    \node[state, initial] (q0) {$q_0$};
    \node[state, right of=q0] (cdots) {$\cdots$};
    \node[state, accepting, right of=cdots] (q2) {$q_2$};
    \node[state, below=2cm of q0] (q1) {$q_1$};

    \draw (q0) edge[bend left] node[right]{$x\geq 1,\{x\}$} (q1)
    (q1) edge[bend left] node[left]{$y\leq k,push_a$} (q0)
    (q0) edge[] node[above]{$pop_a^{k + 1}$} (cdots)
    (cdots) edge[] node[above]{} (q2);
\end{tikzpicture}
\caption{Parametrized $PDTA$ - $B_2(k)$. Used in experiments using different values of $k$. The $\cdots$ indicate there are a series of transitions (linear) between $q_0$ and $q_2$ consisting of $k+1$ pops. The number of states between $q_0$ and $q_2$, will be $k+1$.}
\label{fig:b2}
\end{figure}
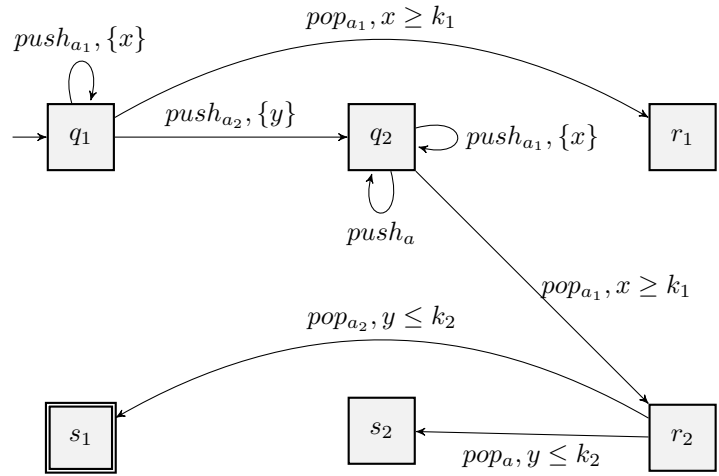
\begin{figure}[H]
\centering
\begin{tikzpicture}[node distance = 4cm]
    \node[state, rectangle, initial] (q1) {$q_1$};
    \node[state, rectangle, right of=q1] (q2) {$q_2$};
    \node[state, rectangle, right of=q2] (r1) {$r_1$};
    \node[state, rectangle, below of=r1] (r2) {$r_2$};
    \node[state, accepting, rectangle, below of=q1] (s1) {$s_1$};
    \node[state, rectangle, below=3cm of q2] (s2) {$s_2$};
    
    \draw (q1) edge[] node[above]{$push_{a_2},\{y\}$} (q2)
    (q1) edge[loop above] node[above]{$push_{a_1}, \{x\}$} (q1)
    (q2) edge[loop below] node[below]{$push_a$} (q2)
    (q2) edge[loop right] node[right]{$push_{a_1},\{x\}$} (q2)
    (q1) edge[bend left] node[above]{$pop_{a_1},x\geq k_1$} (r1)
    (q2) edge[] node[right]{$pop_{a_1},x\geq k_1$} (r2)
    (r2) edge[] node[below]{$pop_a,y \leq k_2$} (s2)
    (r2) edge[bend right] node[above]{$pop_{a_2}, y \leq k_2$} (s1);
\end{tikzpicture}
\caption{$B_3(k_1, k_2)$ Parametrized $PDTA$ with parameters $k_1$, and $k_2$. If $k_1 > k_2$, then the state $\{q_1,r_1\}$ is reachable. Otherwise, $\{q_1,r_1,s_1\}$ are reachable.}
\label{fig:b3}
\end{figure}
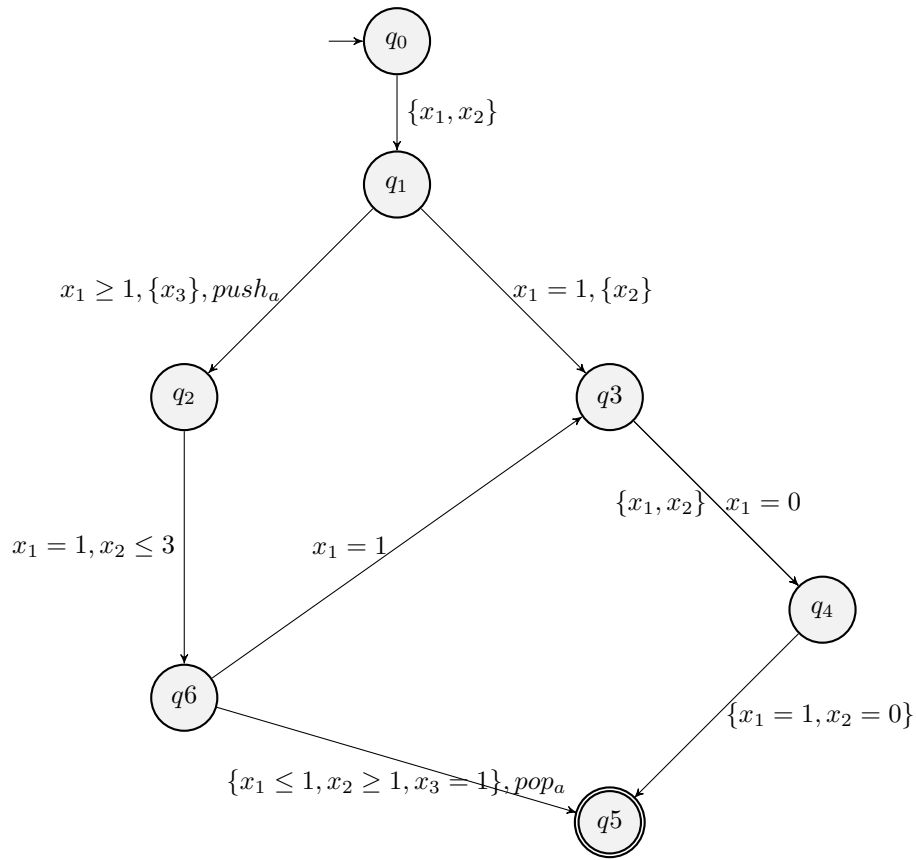
\begin{figure}[H]
\centering
\begin{tikzpicture}[node distance = 4cm]
    \node[state, initial] (q0) {$q_0$};
    \node[state, below=1cm of q0] (q1) {$q_1$};
    \node[state, below left of=q1] (q2) {$q_2$};
    \node[state, below right of=q1] (q3) {$q3$};
    \node[state, below right of=q3] (q4) {$q_4$};
    \node[state, accepting, below left of=q4] (q5) {$q5$};
    \node[state, below of=q2] (q6) {$q6$};
    
    \draw (q0) edge[] node[right]{$\{x_1,x_2\}$} (q1)
    (q1) edge[] node[left]{$x_1\geq 1, \{x_3\},push_a$} (q2)
    (q1) edge[] node[right]{$x_1=1,\{x_2\}$} (q3)
    (q2) edge[] node[left]{$x_1=1,x_2\leq3$} (q6)
    (q6) edge[] node[left]{$x_1=1$} (q3)
    (q6) edge[] node[below]{$\{x_1\leq1,x_2\geq 1,x_3=1\},pop_a$} (q5)
    (q3) edge[] node[left]{$\{x_1,x_2\}$} (q4)
    (q4) edge[] node[right]{$\{x_1=1,x_2=0\}$} (q5)
    (q3) edge[] node[right]{$x_1=0$} (q4);
\end{tikzpicture}
\caption{$B_4$ $PDTA$, states $\{q_0,q_1,q_3,q_4\}$ are reachable.}
\label{fig:b4}
\end{figure}
\begin{figure}[H]
\centering
\begin{tikzpicture}[node distance = 3cm]
    \node[state, initial] (q0) {$q_0$};
    \node[state, right=1cm of q0] (r1) {$r_1$};
    \node[state, below of=r1] (r1p) {$r_1'$};
    \node[state, right=2cm of r1] (r2) {$r_2$};
    \node[state, above of=r2] (r2p) {$r_2'$};
    \node[state, right=2cm of r2] (r3) {$r_3$};
    \node[state, below of=r3] (r3p) {$r_3'$};
    \node[state, right=2cm of r3] (r4) {$r_4$};
    \node[state, above of=r4] (r4p) {$r_4'$};
    \node[state, accepting, right=1cm of r4] (q1) {$q_1$};
    
    \draw (q0) edge[] node[above]{$push_a$} (r1)
    (r1) edge[bend left] node[right]{$x\geq1,\{x\}$} (r1p)
    (r1p) edge[bend left] node[left]{$y\leq k_2$} (r1)
    (r1) edge[] node[above]{$push_a$} (r2)
    (r2) edge[bend left] node[left]{$x\geq1,\{x\}$} (r2p)
    (r2p) edge[bend left] node[right]{$y\leq k_2$} (r2)
    (r2) edge[] node[above]{$pop_a$} (r3)
    (r3) edge[bend left] node[right]{$x\geq1,\{x\}$} (r3p)
    (r3p) edge[bend left] node[left]{$y\leq k_2$} (r3)
    (r3) edge[] node[above]{$pop_a$} (r4)
    (r4) edge[bend left] node[left]{$x\geq1,\{x\}$} (r4p)
    (r4p) edge[bend left] node[right]{$y\leq k_2$} (r4)
    (r4) edge[] node[above]{} (q1);
\end{tikzpicture}
\caption{$B_5(4,k_2)$: Parametrized PDTA with $k_1=4$. $k_1$ indicates the total number of pairs of states $r_x$, and $r_x'$ in between $q_0$ and $q_1$. For all even $k_1$, state $q_1$ is reachable.}
\label{fig:b5}
\end{figure}
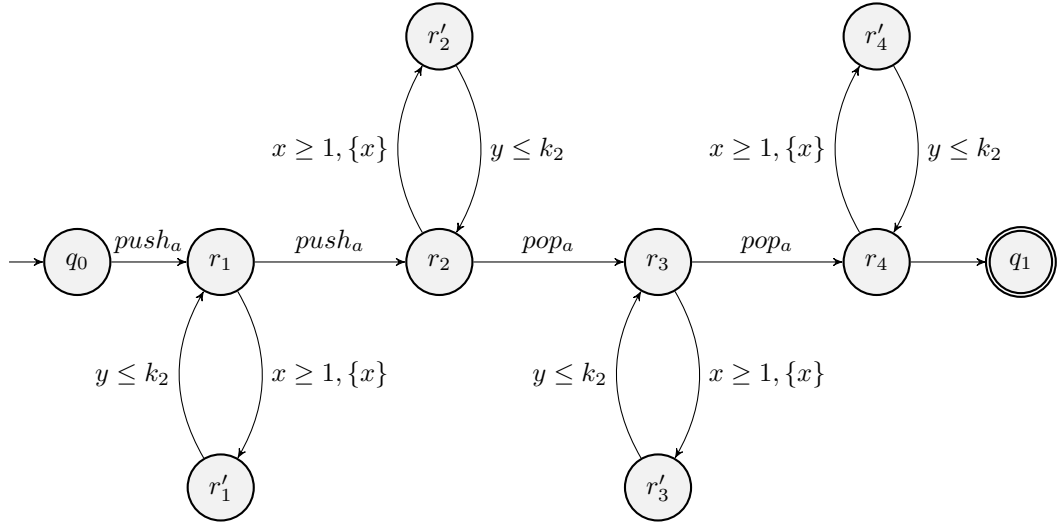
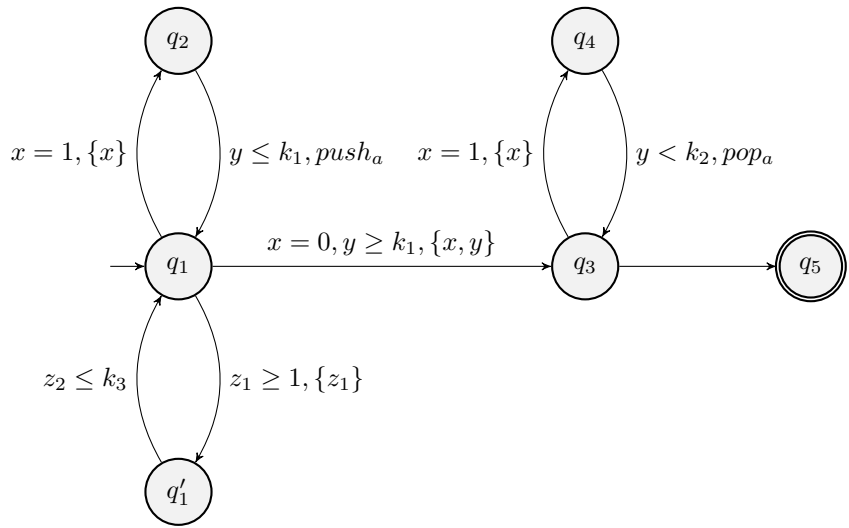
\begin{figure}[H]
\centering
\begin{tikzpicture}[node distance = 3cm]
    \node[state, initial] (q1) {$q_1$};
    \node[state, below of=q1] (q1p) {$q_1'$};
    \node[state, above of=q1] (q2) {$q_2$};
    \node[state, right=4.5cm of q1] (q3) {$q_3$};
    \node[state, above of=q3] (q4) {$q_4$};
    \node[state, accepting, right of=q3] (q5) {$q_5$};
    
    \draw (q1) edge[bend left] node[left]{$x=1,\{x\}$} (q2)
    (q2) edge[bend left] node[right]{$y\leq k_1,push_a$} (q1)
    (q1) edge[bend left] node[right]{$z_1\geq 1, \{z_1\}$} (q1p)
    (q1p) edge[bend left] node[left]{$z_2\leq k_3$} (q1)
    (q1) edge[] node[above]{$x=0,y\geq k_1,\{x,y\}$} (q3)
    (q3) edge[bend left] node[left]{$x=1,\{x\}$} (q4)
    (q4) edge[bend left] node[right]{$y<k_2,pop_a$} (q3)
    (q3) edge[] node[]{} (q5);
\end{tikzpicture}
\caption{$B_6(k_1,k_2,k_3)$: State $q_5$ is not reachable if $k_1 \geq k_2$, otherwise it is.}
\label{fig:b6}
\end{figure}
\begin{figure}[H]
\centering
\begin{tikzpicture}[node distance = 3cm]
    \node[state, initial] (q1) {$q_1$};
    \node[state, right=4cm of q1] (q2) {$q_2$};
    \node[state, above left of=q2] (q3) {$q_3$};
    \node[state, above right of=q2] (q4) {$q_4$};
    \node[state, accepting, right=4cm of q2] (q5) {$q_5$};
    
    \draw (q1) edge[loop above] node[above]{$x>1,push_a,\{x\}$} (q1)
    (q1) edge[loop below] node[below]{$y<2,push_b,\{y\}$} (q1)
    (q1) edge[] node[above]{$x=0,z=20$} (q2)
    (q2) edge[] node[left]{$pop_b$} (q3)
    (q3) edge[] node[above]{$pop_a$} (q4)
    (q4) edge[] node[right]{$pop_a$} (q2)
    (q2) edge[] node[]{} (q5);
\end{tikzpicture}
\caption{$B_7$: $PDTA$ with state $\{q_1\}$ as reachable.}
\label{fig:b7}
\end{figure}
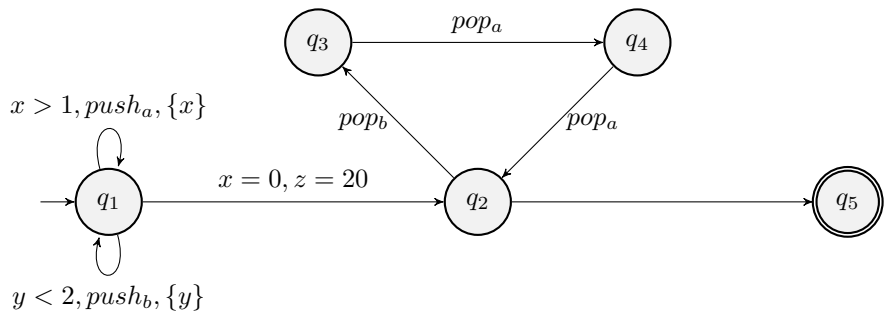
\begin{figure}[H]
\centering
\begin{tikzpicture}[node distance = 3cm]
    \node[state, initial] (q1) {$q_1$};
    \node[state, right of=q1] (q2) {$q_2$};
    \node[state, right=3.5cm of q2] (q3) {$q_3$};
    \node[state, below=1cm of q3] (q4) {$q_4$};
    \node[state, left=3.5cm of q4] (q5) {$q_5$};
    \node[state, left of=q5] (q6) {$q_6$};
    \node[state, below=1cm of q6] (q7) {$q_7$};
    \node[state, accepting, right of=q7] (q8) {$q_8$};
    
    \draw (q1) edge[] node[above]{$push_a,\{x_2\}$} (q2)
    (q2) edge[] node[above]{$x_2=1,pop_a,\{x_4\}$} (q3)
    (q3) edge[] node[right]{$x_4=0,push_b,\{x_3\}$} (q4)
    (q4) edge[] node[above]{$x_3\geq1,pop_b,\{x_1\}$} (q5)
    (q5) edge[] node[above]{$\{x_1\}$} (q6)
    (q6) edge[] node[left]{$push_a,\{x_2\}$} (q7)
    (q7) edge[] node[above]{$x_2\geq1,pop_a$} (q8);
\end{tikzpicture}
\caption{$B_8$: $PDTA$ with states $\{q_1,q_3,q_5,q_6,q_8\}$ as reachable.}
\label{fig:b8}
\end{figure}
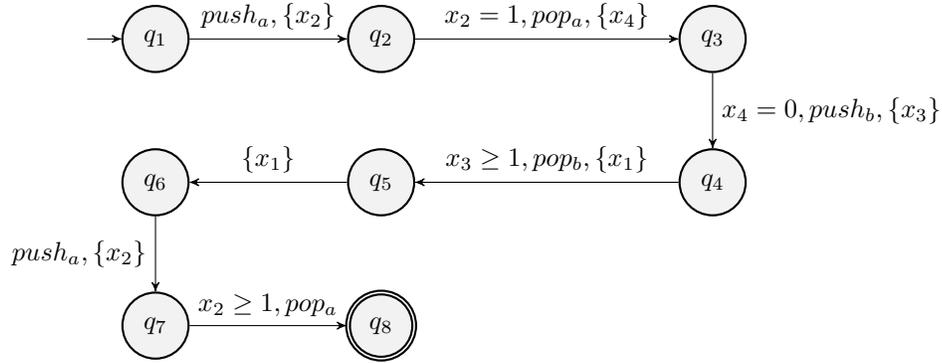
\begin{figure}[H]
\centering
\begin{tikzpicture}[node distance = 3cm]
    \node[state, initial] (q0) {$q_0$};
    \node[state, above left=2cm of q0] (r1) {$r_1$};
    \node[state, above right=2cm of r1] (r2) {$r_2$};
    \node[state, above=1cm of r2] (r2p) {$r_2'$};
    \node[state, below right=2cm of r2] (r3) {$r_3$};
    \node[state, below left=2cm of q0] (r4) {$r_4$};
    \node[state, below right=2cm of r4] (r5) {$r_5$};
    \node[state, below=1cm of r5] (r5p) {$r_5'$};
    \node[state, above right=2cm of r5] (r6) {$r_6$};
    \node[state, right of=q0] (s1) {$s_1$};
    \node[state, right of=s1] (s2) {$s_2$};
    \node[state, right of=s2] (cdots) {$\cdots$};
    \node[state, accepting, right of=cdots] (sf) {$s_f$};
    
    \draw (q0) edge[] node[below left]{$push_{a_1}$} (r1)
    (r1) edge[] node[above left]{$push_{a_2}$} (r2)
    (r2) edge[] node[right]{$push_{a_3}$} (r3)
    (r3) edge[] node[left]{$push_{a_4}$} (q0)
    (q0) edge[] node[left]{$push_{a_5}$} (r4)
    (r4) edge[] node[left]{$push_{a_6}$} (r5)
    (r5) edge[] node[right]{$push_{a_7}$} (r6)
    (r6) edge[] node[right]{$push_{a_8}$} (q0)
    (q0) edge[] node[above]{$pop_{a_4}$} (s1)
    (s1) edge[] node[above]{$pop_{a_3}$} (s2)
    (s2) edge[] node[above]{$pop_{a_2}$} (cdots)
    (cdots) edge[] node[above]{$pop_{a_5}$} (sf)
    (r2) edge[bend left] node[left]{$x\geq1,\{x\}$} (r2p)
    (r2p) edge[bend left] node[right]{$y\leq k_2$} (r2)
    (r5) edge[bend left] node[right]{$x\geq1,\{x\}$} (r5p)
    (r5p) edge[bend left] node[left]{$y\leq k_2$} (r5);
\end{tikzpicture}
\caption{$B_9(2,k_2)$: Parametrized PDTA with $k_1=2$. $k_1$ indicates the total number of loops of size 4, around $q_0$. It involves a direct loop of 4 pushes on unique symbols $a_i$ to $a_{i+4}$, and another loop on the middle state. Finally starting from $q_0$ there is a line of transitions with no loops, having pops matching all loops in order. Only $\{q_0,s_f\}$ are reachable in the automata irrespective of the values of $k_1,k_2$.}
\label{fig:b9}
\end{figure}
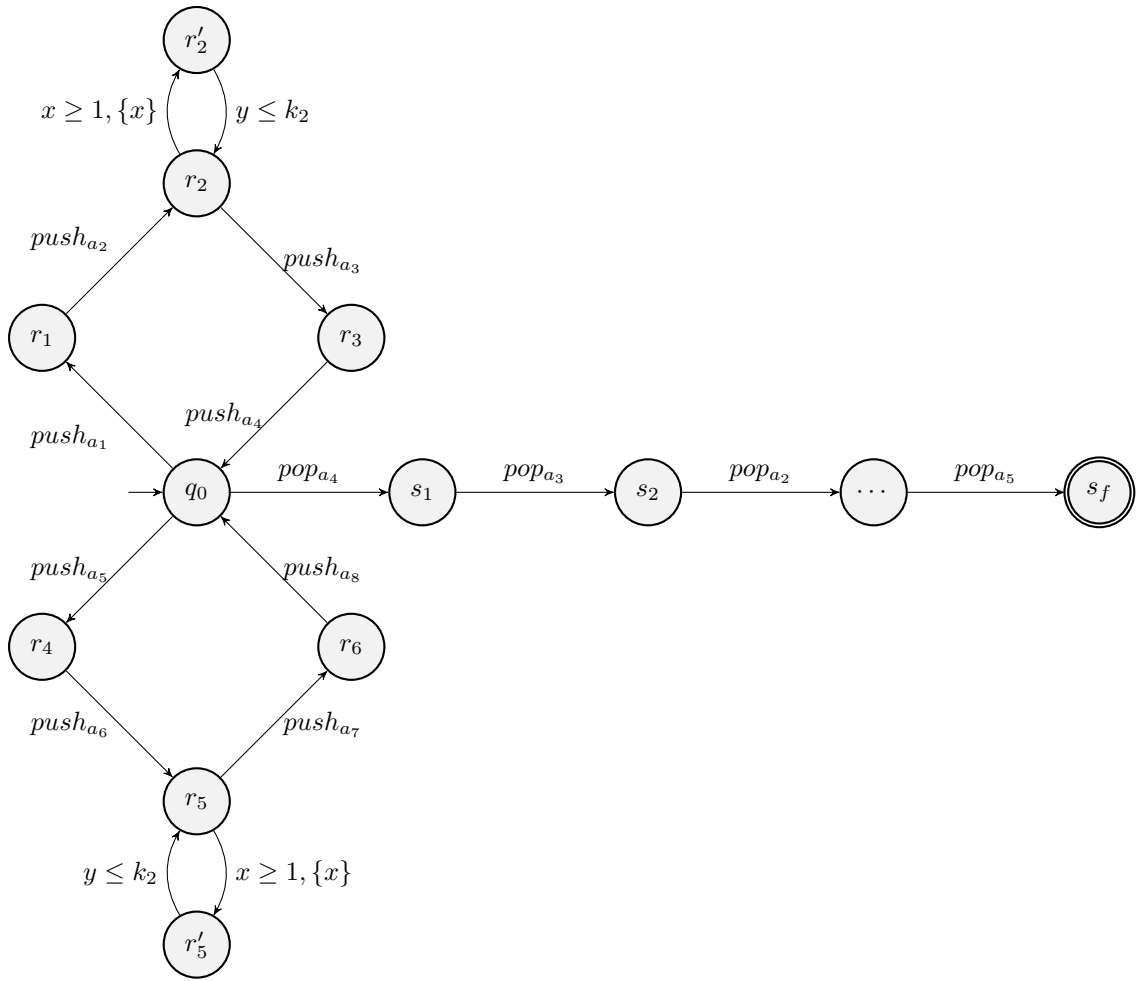
\begin{figure}[H]
\centering
\begin{tikzpicture}[node distance = 3cm]
    \node[state, initial] (q1) {$q_1$};
    \node[state, right of=q1] (q2) {$q_2$};
    \node[state, above of=q2] (q3) {$q_3$};
    \node[state, accepting, right of=q2] (q4) {$q_4$};
    
    \draw (q1) edge[loop above] node[above]{$x>1,push_a,\{x\}$} (q1)
    (q1) edge[loop below] node[below]{$y<2,push_b,\{y\}$} (q1)
    (q1) edge[] node[above]{$x=0,z=4$} (q2)
    (q2) edge[bend left] node[left]{$pop_a$} (q3)
    (q3) edge[bend left] node[right]{$pop_b$} (q2)
    (q2) edge[] node[]{} (q4);
\end{tikzpicture}
\caption{$B_{10}$: $PDTA$ with states $\{q_1,q_2,q_3,q_4\}$ as reachable.}
\label{fig:b10}
\end{figure}
\newpage
\subsection{All Results}\label{app:all_results}
Here we display all results on the benchmarks that have been used.

\begin{table}[H]
    \centering
      \begin{tabular}{ |p{2.3cm}||p{1.5cm}|p{1.5cm}|p{1.5cm}|p{1.5cm}|p{1.5cm}|p{1.5cm}|  }
 \hline
 \multicolumn{7}{|c|}{Testing on Benchmarks} \\
 \hline
  Benchmark& \quad $\preceq_{LU}$ & \quad $\preceq_{LU}$ & \quad $\sim_{LU}$ &\quad $\sim_{LU}$& \quad $RB$ & \quad $RB$\\
 & Time& \# nodes & Time & \# nodes& Time & \# nodes\\
 \hline
 $B_1$&   0.2  & 17& 0.2 &17 &235.6 &4100\\
 $B_2(5)$   & 0.3    &27&   0.3 &27 & 21 &1500\\
 $B_2(10)$   & 0.8    &77&   0.8 &77 & 6835.8 &30200\\
 $B_2(100)$&   20.0  & 5252   &20.7 &5252 & T.O. & $\geq$154700\\
 $B_2(1000)$ & 9140.4 & 502502&  9164.8 &502502 & T.O. & T.O.\footnotemark[1]\\
 $B_3(4,3)$& 0.2  & 6   & 0.2 & 6 & 1043.8 & 14300\\
 $B_3(3,4)$& 0.2  & 9 &0.2 &9 & 98.8 & 3400\\
 $B_4$    &0.2 & 8&  0.1 & 8&0.3 &17 \\
 $B_5(100,10)$& 0.8 & 202 & 5.4 &2212 &OOM & OOM\\
 $B_5(100,100)$&0.6 & 202 & 67.6 &20302 &OOM & OOM\\
 $B_5(100,1000)$& 0.7 & 202 & 3564.3 &201202 &OOM & OOM\\
 $B_5(1000,100)$& 4.2& 2002 & 673.8 &202102 &OOM &OOM\\
 $B_5(5000,100)$& 23.2& 10002 &3429.3 &1010102 &OOM &OOM\\
 $B_6(4,5,100)$& 0.3 & 30&12.4 &2459 &NA &NA\\
 $B_6(4,5,1000)$& 0.3 & 30 &483.1 &24059 &NA &NA\\
 $B_6(4,5,10000)$& 0.3 & 30 &47694.8 &240059 &NA &NA\\
 $B_6(5,4,100)$& 0.3 & 30 &14.3 &3047 &NA &NA\\
 $B_6(5,4,1000)$& 0.3 & 30 &611.8 &30047 &NA &NA\\
 $B_6(5,4,10000)$& 0.3 & 30 &60271.9 &300047 &NA &NA\\
 $B_6(500,501,100)$& 38.9 & 3006 &509.8 &34802 &NA &NA\\
 $B_6(501,500,100)$& 38.2 & 3006 &501.0 &34799 &NA &NA\\
 $B_7$& 112.4 & 4475 & 113.1 &4475 &NA &NA\\
 $B_8$& 0.2  & 8 & 0.2 & 8 & 0.6 &26\\
 $B_9(10,10)$& 0.3& 81 &13.3 &4136 &T.O. &$\geq$96500\\
 $B_9(50,10)$& 1.1& 401&52.9 &20856 &OOM &OOM\\
 $B_9(100,10)$& 1.9 & 801 &107.0 &41756 &OOM &OOM\\
 $B_9(10,20)$& 0.5 & 81 &47.2 &14091 &T.O. & $\geq$106400\\
 $B_9(10,50)$& 0.4 & 81 &543.2 &79356 &T.O. &$\geq$370500\\
 $B_9(10,100)$& 0.4 & 81 &5374.1 &306131 &T.O. &$\geq$133600\\
 $B_{10}$& 1.8 & 150 &2.0 &166 &NA &NA\\
 \hline
\end{tabular}
    \caption{List of all results on the three environments. All time is recorded in milliseconds. T.O. refers to timeout before 120 seconds, and OOM refers to OOM kill of process. In case of timeout $\geq n$, refers to recorded number of nodes before timeout occurred. In case of $B_2(1000)$ the preprocessing is not complete before the timeout and hence no nodes are displayed.}
    \label{tab:final_expanded_results}
\end{table}
\end{document}